\newtheorem{df}{Definition}[section]
\newtheorem{lem}{Lemma}[section]
\newtheorem{prp}{Proposition}[section]
\newtheorem{thm}{Theorem}[section]
\newtheorem{cor}{Corollary}[section]
\newtheorem{rem}{Remark}[section]
\numberwithin{equation}{section}
\DeclareMathOperator{\diag}{diag}
\DeclareMathOperator{\tr}{tr}
\title{Pricing Interest Rate Derivatives under Volatility Uncertainty}
\author{Julian H\"olzermann\footnote{Center for Mathematical Economics, Bielefeld University, Bielefeld, Germany. Email: julian.hoelzermann@uni-bielefeld.de. The author thanks Frank Riedel for valuable advice and Max Nendel, Wolfgang Runggaldier, and the participants of the ``13th European Summer School in Financial Mathematics'' in Vienna for fruitful discussions. The author gratefully acknowledges financial support by the German Research Foundation (Deutsche Forschungsgemeinschaft) via Collaborative Research Center 1283.}}
\begin{document}

\maketitle

\begin{abstract}
\noindent In this paper, we study the pricing of contracts in fixed income markets under volatility uncertainty in the sense of Knightian uncertainty or model uncertainty. The starting point is an arbitrage-free bond market under volatility uncertainty. The uncertainty about the volatility is modeled by a $G$-Brownian motion, which drives the forward rate dynamics. The absence of arbitrage is ensured by a drift condition. Such a setting leads to a sublinear pricing measure for additional contracts, which yields either a single price or a range of prices. Similar to the forward measure approach, we define the forward sublinear expectation to simplify the pricing of cashflows. Under the forward sublinear expectation, we obtain a robust version of the expectations hypothesis, and we show how to price options on forward prices. In addition, we develop pricing methods for contracts consisting of a stream of cashflows, since the nonlinearity of the pricing measure implies that we cannot price a stream of cashflows by pricing each cashflow separately. With these tools, we derive robust pricing formulas for all major interest rate derivatives. The pricing formulas provide a link to the pricing formulas of traditional models without volatility uncertainty and show that volatility uncertainty naturally leads to unspanned stochastic volatility.
\end{abstract}

\noindent\textbf{Keywords:} Fixed Income Markets, Fixed Income Derivatives, Ambiguous Volatility, Knightian Uncertainty, Model Uncertainty, Robust Finance
\\\textbf{JEL Classification:} G12, G13
\\\textbf{MSC2010:} 91G20, 91G30

\section{Introduction}
The present paper deals with the pricing of interest rate derivatives under volatility uncertainty in the sense of Knightian uncertainty or model uncertainty, also referred to as ambiguous volatility. Due to the assumption of a single, known probability measure, traditional models in finance are subject to \textit{model uncertainty}---that is, the uncertainty about using the correct probability measure---since it is not always possible to specify the probabilistic law of the underlying. Therefore, a new stream of research, called \textit{robust finance}, emerged in the literature, examining financial markets in the presence of a family of probability measures (or none at all) to obtain a robust model. The most frequently studied type of model uncertainty is volatility uncertainty: the volatility determines the probabilistic law of the underlying, but there are many ways to model the volatility of an underlying and it is unknown which describes the future evolution of the volatility best. The literature on robust finance has led to pricing rules that are robust with respect to the volatility. The aim of this paper is to develop robust pricing rules for contracts traded in fixed income markets.
\par The initial setting is an arbitrage-free bond market under volatility uncertainty. The uncertainty about the volatility is represented by a family of probability measures, called \textit{set of beliefs}, consisting of all beliefs about the volatility. This framework naturally leads to a sublinear expectation and a $G$-Brownian motion. A $G$-Brownian motion, which was invented by \citet*{peng2019}, is basically a standard Brownian motion with an ambiguous volatility---the volatility is completely uncertain but bounded by two extremes. We model the bond market in the spirit of \citet*{heathjarrowmorton1992} (HJM); that is, we model the instantaneous forward rate as a diffusion process, which is driven by a $G$-Brownian motion. The remaining quantities on the bond market are defined in terms of the forward rate in accordance with the HJM methodology. We model the forward rate in such a way that it satisfies a suitable drift condition, ensuring the absence of arbitrage on the bond market. Additionally, we assume that the diffusion coefficient of the forward rate is deterministic, which enables us to derive pricing methods for typical derivatives and corresponds to an HJM model in which the foward rate is normally distributed.
\par In the presence of volatility uncertainty, we obtain a sublinear pricing measure for additional contracts we add to the bond market, which yields either a single price or a range of prices. Within the framework described above, we consider additional contracts, which we want to price without admitting arbitrage. The pricing of contracts under volatility uncertainty is different from the classical approach, since the expectation---which corresponds to the pricing measure in the classical case without volatility uncertainty---is sublinear in this setting. In contrast to the classical case, we use the sublinear expectation to determine the price of a contract or its bounds; hence, we refer to it as the \textit{risk-neutral sublinear expectation}. To show that this approach indeed yields arbitrage-free prices, we define trading strategies and arbitrage on the bond market extended by the additional contract. Then we show that the extended bond market is arbitrage-free, meaning that we can use this approach to find no-arbitrage prices for contracts.
\par To simplify the pricing of single cashflows, we introduce a counterpart of the forward measure, called \textit{forward sublinear expectation}. The forward measure, invented by \citet*{geman1989}, is used for pricing discounted cashflows in classical models without volatility uncertainty \citep*{bracemusiela1994,gemanelkarouirochet1995,jamshidian1989}. We define the forward sublinear expectation by a $G$-backward stochastic differential equation (BSDE) and show that it corresponds to the expectation under the forward measure. Similar to the forward measure, the forward sublinear expectation has the advantage that computing the sublinear expectation of discounted cashflows reduces to computing the forward sublinear expectation of cashflows, discounted with the bond price. Under the forward sublinear expectation, we obtain several results needed for pricing cashflows of typical fixed income products. As a by-product, we obtain a robust version of the expectations hypothesis under the forward sublinear expectation. Moreover, we provide pricing methods for options on forward prices. The prices of such options are characterized by nonlinear partial differential equations (PDEs) or, in some cases, by the prices from the corresponding HJM model without volatility uncertainty.
\par In addition, we develop pricing methods for contracts consisting of several cashflows. In traditional models without volatility uncertainty, there is no distinction between pricing single cashflows and pricing a stream of cashflows, since the pricing measure is linear. However, when there is uncertainty about the volatility, the nonlinearity of the pricing measure implies that we cannot generally price a stream of cashflows by pricing each cashflow separately. Therefore, we provide different schemes for pricing a family of cashflows. If the cashflows of a contract are sufficiently simple, we can price the contract as in the classical case. In general, we use a backward induction procedure to find the price of a contract. When the contract consists of a family of options on forward prices, the price of the contract is characterized by a system of nonlinear PDEs or, in some cases, by the price from the corresponding HJM model without volatility uncertainty.
\par With the tools mentioned above, we derive robust pricing formulas for all major interest rate derivatives. We consider typical linear contracts, such as fixed coupon bonds, floating rate notes, and interest rate swaps, and nonlinear contracts, such as swaptions, caps and floors, and in-arrears contracts. Due to the linearity of the payoff, we obtain a single price for fixed coupon bonds, floating rate notes, and interest rate swaps; the pricing formula is the same as the one from classical models without volatility uncertainty. Due to the nonlinearity of the payoff, we obtain a range of prices for swaptions, caps and floors, and in-arrears contracts; the range is bounded from above, respectively below, by the price from the corresponding HJM model without volatility uncertainty with the highest, respectively lowest, possible volatility. Therefore, the pricing of common interest rate derivatives under volatility uncertainty reduces to computing prices in models without volatility uncertainty. For other (less common) contracts the pricing procedure requires (novel) numerical methods.
\par The pricing formulas show that volatility uncertainty is able to naturally explain empirical findings that many traditional term structure models fail to reproduce. According to empirical evidence, volatility risk in fixed income markets cannot be hedged by trading solely bonds, which is termed \textit{unspanned stochastic volatility} and inconsistent with traditional term structure models \citep*{collindufresnegoldstein2002}. Since the presence of volatility uncertainty naturally leads to market incompleteness, the pricing formulas derived in this paper show that it is no longer possible to hedge volatility risk in fixed income markets with a portfolio consisting solely of bonds when there is uncertainty about the volatility. Moreover, the pricing formulas are in line with the empirical findings of \citet*{collindufresnegoldstein2002}.
\par Apart from giving a natural explanation for empirical findings, the theoretical results can be used in practice for different purposes. One can use the pricing procedure for stress testing by pricing contracts in the presence of different levels of volatility uncertainty and investigating how the pricing bounds behave compared to the price from the corresponding HJM model without volatility. One can also fit the pricing bounds to bid-ask spreads of quoted prices to obtain the bounds for the volatility and use them to price other contracts. Alternatively, the bounds for the volatility can be inferred from historical data on the volatility in the form of confidence intervals to generally price contracts.
\par The literature on model uncertainty and, especially, volatility uncertainty in financial markets or, primarily, asset markets is very extensive. The first to apply the concept of volatility uncertainty to asset markets were \citet*{avellanedalevyparas1995} and \citet*{lyons1995}. Over a decade afterwards, the topic gained a lot of interest \citep*{epsteinji2013,vorbrink2014}. The interesting fact about volatility uncertainty is that it is represented by a nondominated set of probability measures. Hence, traditional results from mathematical finance like the fundamental theorem of asset pricing break down. There are various attempts to extend the theorem to a multiprior setting \citep*{bayraktarzhou2017,biaginibouchardkardarasnutz2017,bouchardnutz2015}. In some situations the theorem can be even extended to a model-free setting, that is, without any reference measure at all \citep*{acciaiobeiglbockpenknerschachermayer2016,burzonifrittellihoumaggisobloj2019,riedel2015}. Most of those works also deal with the problem of pricing and hedging derivatives in the presence of model uncertainty. The topic has been studied separately in the presence of volatility uncertainty \citep*{vorbrink2014}, in the presence of a general set of priors \citep*{aksamitdengoblojtan2019,carassusoblojwiesel2019}, and in a model-free setting \citep*{bartlkupperpromeltangpi2019,beiglbockcoxhuesmannperkowskipromel2017}. The most similar setting is the one of \citet*{vorbrink2014}, since it focuses on volatility uncertainty modeled by a $G$-Brownian motion. However, the focus, as in most of the studies from above, lies on asset markets.
\par In addition, there is an increasing number of articles dealing with interest rate models or related credit risk under model uncertainty \citep*{acciaiobeiglbockpammer2021,avellanedalewicki1996,biaginioberpriller2021,biaginizhang2019,epsteinwilmott1999,fadinaneufeldschmidt2019,fadinaschmidt2019,holzermann2021,holzermann2021'}. Among those, there are also articles focusing on volatility uncertainty in interest rate models \citep*{avellanedalewicki1996,fadinaneufeldschmidt2019,holzermann2021,holzermann2021'}. The only one working in a general HJM framework is a companion paper \citep*{holzermann2021'}. The remaining articles on model uncertainty in interest rate models either correspond to short rate models or do not study volatility uncertainty. The main result of the accompanying article \citep*{holzermann2021'} is a drift condition, which shows how to obtain an arbitrage-free term structure in the presence of volatility uncertainty. Starting from an arbitrage-free term structure, the aim of the present paper is to study the pricing of derivatives in fixed income markets under volatility uncertainty.
\par There are several ways to describe volatility uncertainty from a mathematical point of view. The classical approach is the one of \citet*{denismartini2006} and \citet*{peng2019}. Actually, those are two different approaches, but they are equivalent as it was shown by \citet*{denishupeng2011}. The difference is that \citet*{denismartini2006} start from a probabilistic setting, whereas the calculus of $G$-Brownian motion from \citet*{peng2019} relies on nonlinear PDEs. Moreover, there are various extensions and generalizations \citep*{nutz2013,nutzvanhandel2013}. Additional results and a different approach to volatility uncertainty were developed by \citet*{sonertouzizhang2011a,sonertouzizhang2011b}. There are also many attempts to a pathwise stochastic calculus, which works without any reference measure \citep*[and references therein]{contperkowski2019}. In this paper, we use the calculus of $G$-Brownian motion, since the literature on $G$-Brownian motion contains a lot of results. In particular, the results of \citet*{hujipengsong2014} are of fundamental importance for the results derived in this paper.
\par The remainder of this paper is organized as follows. Section \ref{arbitrage-free bond market} introduces the overall setting of the model: an arbitrage-free bond market under volatility uncertainty. In Section \ref{risk-neutral valuation}, we show that we can use the risk-neutral sublinear expectation as a pricing measure for additional contracts. In Section \ref{pricing single cashflows}, we define the forward sublinear expectation and derive related results for the pricing of single cashflows. Section \ref{pricing a stream of cashflows} provides schemes for pricing contracts consisting of a stream of cashflows. In Section \ref{common interest rate derivatives}, we derive pricing formulas for the most common interest rate derivatives. In Section \ref{market incompleteness}, we discuss market incompleteness and show that volatility uncertainty leads to unspanned stochastic volatility. Section \ref{conclusion} gives a conclusion.

\section{Arbitrage-Free Bond Market}\label{arbitrage-free bond market}
We represent the (Knightian) uncertainty about the volatility by a familiy of probability measures such that each measure corresponds to a specific belief about the volatility. Let us consider a probability space $(\Omega,\mathcal{F},P_0)$ such that the canonical process $B=(B_t^1,...,B_t^d)_{t\geq0}$ is a $d$-dimensional standard Brownian motion under $P_0$. Furthermore, let $\mathbb{F}=(\mathcal{F}_t)_{t\geq0}$ be the filtration generated by $B$ and completed by all $P_0$-null sets. The state space of the uncertain volatility is given by
\begin{align*}
\Sigma:=\big\{\sigma\in\mathbb{R}^{d\times d}\,\big\vert\,\sigma=\diag(\sigma_1,...,\sigma_d),\,\sigma_i\in[\underline{\sigma}_i,\overline{\sigma}_i]\,\text{for all}\,i=1,...,d\big\},
\end{align*}
where $\overline{\sigma}_i\geq\underline{\sigma}_i>0$ for all $i$; that means, we consider all scenarios in which there is no correlation and the volatility is bounded by two extremes: the matrices $\overline{\sigma}:=\diag(\overline{\sigma}_1,...,\overline{\sigma}_d)$ and $\underline{\sigma}:=\diag(\underline{\sigma}_1,...,\underline{\sigma}_d)$. For each $\Sigma$-valued, $\mathbb{F}$-adapted process $\sigma=(\sigma_t)_{t\geq0}$, we define the process $B^\sigma=(B_t^\sigma)_{t\geq0}$ by
\begin{align*}
B_t^\sigma:=\int_0^t\sigma_udB_u
\end{align*}
and the measure $P^\sigma$ to be the probabilistic law of the process $B^\sigma$, that is,
\begin{align*}
P^\sigma:=P_0\circ(B^\sigma)^{-1}.
\end{align*}
We denote the collection of all such measures by $\mathcal{P}$, which is termed the \textit{set of beliefs}, since it contains all beliefs about the volatility. Now the canonical process has a different volatility under each measure in the set of beliefs.
\par Volatility uncertainty naturally leads to a $G$-expectation and a $G$-Brownian motion. If we define the sublinear expectation $\hat{\mathbb{E}}$ by
\begin{align*}
\hat{\mathbb{E}}[\xi]:=\sup_{P\in\mathcal{P}}\mathbb{E}_P[\xi]
\end{align*}
for all random variables $\xi$ such that $\mathbb{E}_P[\xi]$ exists for all $P\in\mathcal{P}$, then $\hat{\mathbb{E}}$ corresponds to the $G$-expectation on $L_G^1(\Omega)$ and $B$ is a $G$-Brownian motion under $\hat{\mathbb{E}}$ \citep*[Theorem 54]{denishupeng2011}. The letter $G$ refers to the sublinear function $G:\mathbb{S}^d\rightarrow\mathbb{R}$, given by
\begin{align*}
G(A):=\tfrac{1}{2}\sup_{\sigma\in\Sigma}\tr(\sigma\sigma'A)=\tfrac{1}{2}\sum_{i=1}^d\big(\overline{\sigma}_i^2(a_{ii})^+-\underline{\sigma}_i^2(a_{ii})^-\big),
\end{align*}
where $\mathbb{S}^d$ is the space of all symmetric $d\times d$ matrices and $\cdot'$ denotes the transpose of a matrix. The function $G$ is the generator of the nonlinear PDE that defines the $G$-expectation and characterizes the distribution and the uncertainty of a $G$-Brownian motion. The space $L_G^1(\Omega)$ is the space of random variables for which the $G$-expectation is defined. We identify random variables in $L_G^1(\Omega)$ if they are equal \textit{quasi-surely}, that is, $P$-almost surely for all $P\in\mathcal{P}$. For further details, the reader may refer to the book of \citet*{peng2019}.
\par We model the forward rate as a diffusion process in the spirit of the HJM methodology. We denote by $f_t(T)$ the forward rate with maturity $T$ at time $t$ for $t\leq T\leq\bar{T}$, where $\bar{T}<\infty$ is a fixed terminal time. We assume that the forward rate process $f(T)=(f_t(T))_{0\leq t\leq T}$, for all $T\leq\bar{T}$, evolves according to the dynamics
\begin{align*}
f_t(T)=f_0(T)+\int_0^t\alpha_u(T)du+\sum_{i=1}^d\int_0^t\beta_u^i(T)dB_u^i+\sum_{i=1}^d\int_0^t\gamma_u^i(T)d\langle B^i\rangle_u
\end{align*}
for some initial integrable forward curve $f_0:[0,\bar{T}]\rightarrow\mathbb{R}$ and sufficiently regular processes $\alpha(T)=(\alpha_t(T))_{0\leq t\leq\bar{T}}$, $\beta^i(T)=(\beta_t^i(T))_{0\leq t\leq\bar{T}}$, and $\gamma^i(T)=(\gamma_t^i(T))_{0\leq t\leq\bar{T}}$ to be specified. The difference compared to the classical HJM model without volatility uncertainty is that there are additional drift terms depending on the quadratic variation processes of the $G$-Brownian motion. We need the additional drift terms in order to obtain an arbitrage-free model as it is described below. However, due to the uncertainty about the volatility, the quadratic variation processes are uncertain, which cannot be included in the first drift term. Thus, we add additional drift terms to the dynamics of the forward rate. More details on this can be found in the companion paper \citep*[Section 2]{holzermann2021'}.
\par The forward rate determines the remaining quantities on the bond market. The bond market consists of zero-coupon bonds for all maturities in the time horizon and the money-market account. The zero-coupon bonds, denoted by $P(T)=(P_t(T))_{0\leq t\leq T}$ for $T\leq\bar{T}$, are defined by
\begin{align*}
P_t(T):=\exp\biggl(-\int_t^Tf_t(s)ds\biggr),
\end{align*}
and the money-market account, denoted by $M=(M_t)_{0\leq t\leq\bar{T}}$, is given by
\begin{align*}
M_t:=\exp\biggl(\int_0^tr_sds\biggr),
\end{align*}
where $r=(r_t)_{0\leq t\leq\bar{T}}$ denotes the short rate process, defined by $r_t:=f_t(t)$. We use the money-market account as a num\'eraire---that is, we focus on the discounted bonds, which are denoted by $\tilde{P}(T)=(\tilde{P}_t(T))_{0\leq t\leq T}$ for $T\leq\bar{T}$ and given by
\begin{align*}
\tilde{P}_t(T):=M_t^{-1}P_t(T).
\end{align*}
\par We model the forward rate in such a way that the related bond market is arbitrage-free. That means, we assume that the forward rate satisfies a suitable drift condition, which implies the absence of arbitrage. In particular, we directly model the forward rate in a risk-neutral way in order to avoid technical difficulties due to a migration to a risk-neutral framework. More specifically, for all $T$, we assume that the drift terms $\alpha(T)$ and $\gamma^i(T)$, for all $i$, are defined by
\begin{align*}
\alpha_t(T):=0,\quad\gamma_t^i(T):=\beta_t^i(T)b_t^i(T),
\end{align*}
respectively, where the process $b^i(T)=(b_t^i(T))_{0\leq t\leq\bar{T}}$ is defined by
\begin{align*}
b_t^i(T):=\int_t^T\beta_t^i(s)ds.
\end{align*}
Under suitable regularity assumptions on $T\mapsto\beta^i(T)$, we can then show that the discounted bonds are symmetric $G$-martingales under $\hat{\mathbb{E}}$, which implies that the bond market is arbitrage-free \citep*[Theorem 3.1]{holzermann2021'}. As mentioned above, this shows that we need the additional drift terms in the forward rate dynamics to obtain an arbitrage-free model.
\par In order to achieve a sufficient degree of regularity and to derive pricing formulas for derivative contracts, we use a deterministic diffusion coefficient. We assume that $\beta^i$, for all $i$, is a continuous function mapping from $[0,\bar{T}]\times[0,\bar{T}]$ into $\mathbb{R}$. Then for each $T$, the processes $\beta^i(T)$ and $b^i(T)$, for all $i$, are bounded processes in $M_G^p(0,\bar{T})$ for all $p<\infty$. The space $M_G^p(0,\bar{T})$ is the space of admissible integrands for stochastic integrals related to a $G$-Brownian motion. The assumption ensures that the forward rate is sufficiently regular to apply the result from above. In addition, it enables us to obtain specific pricing formulas for common interest rate derivatives. This is similar to the classical case without volatility uncertainty, in which it is possible to obtain analytical pricing formulas by assuming that the diffusion coefficient is deterministic. So the present model corresponds to an HJM model with a normally distributed forward rate.

\section{Risk-Neutral Valuation}\label{risk-neutral valuation}
Now we extend the bond market to an additional contract, for which we want to find a price. A typical contract in fixed income markets consists of a stream of cashflows; so we consider a contract, denoted by $X$, that has a payoff of $\xi_i$ at each time $T_i$ for all $i=0,1,...,N$, where $0<T_0<T_1<...<T_N=\bar{T}$ is the tenor structure. The price at time $t$ of such a contract is denoted by $X_t$ for all $t\leq\bar{T}$. As for the bonds, we consider the discounted payoff $\tilde{X}$, defined by
\begin{align*}
\tilde{X}:=\sum_{i=0}^NM_{T_i}^{-1}\xi_i,
\end{align*}
and the discounted price $\tilde{X}_t$ for $t\leq\bar{T}$, which is defined by
\begin{align*}
\tilde{X}_t:=M_t^{-1}X_t.
\end{align*}
We assume that $M_{T_i}^{-1}\xi_i\in L_G^2(\Omega_{T_i})$ for all $i=0,1,...,N$ for $X$ to be regular enough.
\par The pricing of contracts in the presence of volatility uncertainty differs from the traditional approach. Classical arbitrage pricing theory suggests that prices of contracts are determined by computing the expected discounted payoff under the risk-neutral measure. In the presence of volatility uncertainty, we call $\hat{\mathbb{E}}$ the \textit{risk-neutral sublinear expectation}, corresponding to the expectation under the risk-neutral measure in the classical case, since the discounted bonds are symmetric $G$-martingales under $\hat{\mathbb{E}}$. Compared to the classical case, the important difference in the case of volatility uncertainty is that the risk-neutral sublinear expectation is nonlinear. In particular, it holds
\begin{align}\label{inequality}
\hat{\mathbb{E}}[\tilde{X}]\geq-\hat{\mathbb{E}}[-\tilde{X}],
\end{align}
that is, the upper expectation does not necessarily coincide with the lower expectation. Thus, we distinguish between symmetric and asymmetric contracts; we consider two contracts: a contract $X^S$, which has a symmetric payoff, and a contract $X^A$, which has an asymmetric payoff. Strictly speaking, this means that $\tilde{X}^S$ satisfies \eqref{inequality} with equality and for $\tilde{X}^A$, the inequality \eqref{inequality} is strict. Of course, the discounted payoffs $\tilde{X}^S$ and $\tilde{X}^A$ are defined as above by considering different payoffs $\xi_i^S$ and $\xi_i^A$ for all $i$, respectively. The related prices are denoted by $X_t^S$ and $\tilde{X}_t^S$ and $X_t^A$ and $\tilde{X}_t^A$ for all $t$, respectively.
\par We determine the prices of contracts by using the risk-neutral sublinear expectation to either obtain the price of a contract or the upper and the lower bound for the price. In the case of a symmetric payoff, we proceed as in the classical case without volatility uncertainty and choose the expected discounted payoff as the price for the contract. In the case of an asymmetric payoff, we use the upper and the lower expectation as bounds for the price, which is a typical approach in the literature on model uncertainty and yields a range of possible prices. Hence, we assume that
\begin{align*}
\tilde{X}_t^S=\hat{\mathbb{E}}_t[\tilde{X}^S]
\end{align*}
for all $t$, where $\hat{\mathbb{E}}_t$ denotes the conditional $G$-expectation, and
\begin{align*}
\hat{\mathbb{E}}[\tilde{X}^A]>\tilde{X}_0^A>-\hat{\mathbb{E}}[-\tilde{X}^A].
\end{align*}
Since $X^S$ has a symmetric payoff, by the martingale representation theorem for symmetric $G$-martingales \citep*[Theorem 4.8]{song2011}, there exists a process $H=(H_t^1,...,H_t^d)_{0\leq t\leq\bar{T}}$ in $M_G^2(0,\bar{T};\mathbb{R}^d)$ such that for all $t$,
\begin{align*}
\tilde{X}_t^S=\tilde{X}_0^S+\sum_{i=1}^d\int_0^tH_u^idB_u^i.
\end{align*}
The latter ensures that the portfolio value (defined below) is well-posed. The reason why we only impose assumptions on the price of the asymmetric contract at time $0$ is described below.
\par In order to show that this pricing procedure yields no-arbitrage prices, we introduce the notion of trading strategies related to the extended bond market and a suitable notion of arbitrage. We allow the agents in the market to trade a finite number of bonds. The symmetric contract can be traded dynamically, but we only allow static trading strategies for the asymmetric contract. Therefore, we do not impose assumptions on $\tilde{X}_t^A$ for $t>0$. The assumption that the asymmetric contract can only be traded statically might seem restrictive. This is a common assumption in the literature on robust finance, since it is important for excluding arbitrage. In this case, the assumption is also reasonable, since most contracts in fixed income markets are traded over-the-counter.
\begin{df}\label{admissible market strategy}
An admissible market strategy is a quadruple $(\pi,\pi^S,\pi^A,\tau)$ consisting of a bounded process $\pi=(\pi_t^1,...,\pi_t^n)_{0\leq t\leq\bar{T}}$ in $M_G^2(0,\bar{T};\mathbb{R}^n)$, a bounded process $\pi^S=(\pi_t^S)_{0\leq t\leq\bar{T}}$ in $M_G^2(0,\bar{T})$, a constant $\pi^A\in\mathbb{R}$, and a vector $\tau=(\tau_1,...,\tau_n)\in[0,\bar{T}]^n$ for some $n\in\mathbb{N}$. The corresponding portfolio value at terminal time is given by
\begin{align}\label{portfolio value}
\tilde{v}(\pi,\pi^S,\pi^A,\tau):=\sum_{i=1}^n\int_0^{\tau_i}\pi^i_td\tilde{P}_t(\tau_i)+\int_0^{\bar{T}}\pi_t^Sd\tilde{X}_t^S+\pi^A(\tilde{X}^A-\tilde{X}_0^A).
\end{align}
\end{df}
\noindent The three terms on the right-hand side of \eqref{portfolio value} correspond to the gains from trading a finite number of bonds, the symmetric contract, and the asymmetric contract, respectively. The assumptions on the processes ensure that the integrals in \eqref{portfolio value} are well-defined. In addition, we use the quasi-sure definition of arbitrage, which is commonly used in the literature on model uncertainty \citep*{biaginibouchardkardarasnutz2017,bouchardnutz2015}.
\begin{df}\label{arbitrage strategy}
An admissible market strategy $(\pi,\pi^S,\pi^A,\tau)$ is an arbitrage strategy if
\begin{align*}
\tilde{v}(\pi,\pi^S,\pi^A,\tau)\geq{}&0\quad\text{quasi-surely},
\\P\big(\tilde{v}(\pi,\pi^S,\pi^A,\tau)>0\big)>{}&0\quad\text{for at least one }P\in\mathcal{P}.
\end{align*}
We say that the extended bond market is arbitrage-free if there is no arbitrage strategy.
\end{df}
\par The following proposition shows that we can use the risk-neutral sublinear expectation as a pricing measure as described above, since the extended bond market is arbitrage-free under the assumptions on the prices of the symmetric and the asymmetric contract.
\begin{prp}\label{no arbitrage}
The extended bond market is arbitrage-free.
\end{prp}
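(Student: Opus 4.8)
The plan is to argue by contradiction: suppose $(\pi,\pi^S,\pi^A,\tau)$ is an arbitrage strategy and derive a contradiction, splitting the argument according to whether the static position $\pi^A$ in the asymmetric contract vanishes. The backbone of the whole proof is a single observation about the dynamic part of the portfolio, $\Gamma:=\sum_{i=1}^n\int_0^{\tau_i}\pi_t^i\,d\tilde{P}_t(\tau_i)+\int_0^{\bar{T}}\pi_t^S\,d\tilde{X}_t^S$, namely that $\mathbb{E}_P[\Gamma]=0$ for every single $P\in\mathcal{P}$. To establish this I would first recall that the discounted bonds $\tilde{P}(\tau_i)$ are symmetric $G$-martingales by the standing assumption \citep*[Theorem 3.1]{holzermann2021'} and that $\tilde{X}^S$ is a symmetric $G$-martingale by the pricing assumption together with the representation $\tilde{X}_t^S=\tilde{X}_0^S+\sum_i\int_0^t H_u^i\,dB_u^i$ from \citep*[Theorem 4.8]{song2011}. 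Since a symmetric $G$-martingale is a genuine $P$-martingale for each fixed $P\in\mathcal{P}$, and the integrands $\pi^i,\pi^S$ are bounded elements of $M_G^2$, each stochastic integral in $\Gamma$ is a $P$-martingale (the $G$-integral coinciding with the Itô integral under $P$); hence $\mathbb{E}_P[\Gamma]=0$ for every $P$ by linearity of each ordinary expectation.

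With this in hand, consider first the case $\pi^A=0$, so that $\tilde{v}=\Gamma$. The quasi-sure inequality $\tilde{v}\ge0$ gives $\Gamma\ge0$ $P$-almost surely for every $P$, while $\mathbb{E}_P[\Gamma]=0$; therefore $\Gamma=0$ $P$-almost surely for every $P$, i.e. $\Gamma=0$ quasi-surely. This contradicts the requirement $P(\tilde{v}>0)>0$ for some $P\in\mathcal{P}$, ruling out arbitrage with no position in the asymmetric contract.

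Now suppose $\pi^A\neq0$. Here I would compute the lower expectation of the terminal wealth directly, using $-\hat{\mathbb{E}}[-\,\cdot\,]=\inf_{P\in\mathcal{P}}\mathbb{E}_P[\,\cdot\,]$ together with $\mathbb{E}_P[\Gamma]=0$:
\[
\inf_{P\in\mathcal{P}}\mathbb{E}_P[\tilde{v}]=\inf_{P\in\mathcal{P}}\big(\pi^A\mathbb{E}_P[\tilde{X}^A]-\pi^A\tilde{X}_0^A\big).
\]
For $\pi^A>0$ this equals $\pi^A\big(-\hat{\mathbb{E}}[-\tilde{X}^A]-\tilde{X}_0^A\big)$, which is strictly negative because $\tilde{X}_0^A>-\hat{\mathbb{E}}[-\tilde{X}^A]$; for $\pi^A<0$ it equals $\pi^A\big(\hat{\mathbb{E}}[\tilde{X}^A]-\tilde{X}_0^A\big)$, again strictly negative because $\hat{\mathbb{E}}[\tilde{X}^A]>\tilde{X}_0^A$. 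Either way $\inf_{P\in\mathcal{P}}\mathbb{E}_P[\tilde{v}]<0$, which is incompatible with $\tilde{v}\ge0$ quasi-surely, the latter forcing $\mathbb{E}_P[\tilde{v}]\ge0$ for all $P$. This already contradicts the first arbitrage condition, so no arbitrage strategy with $\pi^A\neq0$ can exist either, completing the proof.

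I expect the main obstacle to be the preliminary claim that the dynamic trading gains $\Gamma$ have zero expectation under \emph{every} $P\in\mathcal{P}$ simultaneously---equivalently, that integrating the bounded strategies $\pi^i,\pi^S$ against the symmetric $G$-martingales $\tilde{P}(\tau_i),\tilde{X}^S$ produces again a symmetric $G$-martingale, with no spurious nonincreasing component appearing in its $G$-martingale decomposition. This is where the interplay between $G$-stochastic integration and the nondominated family $\mathcal{P}$ is delicate, and it is precisely what the boundedness and $M_G^2$-integrability imposed in Definition \ref{admissible market strategy}, together with the symmetric martingale representation \citep*[Theorem 4.8]{song2011}, are there to guarantee. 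Once this is settled, the remainder is just the case split above and the translation of the sublinear pricing assumptions on $\tilde{X}^A$ into strict supremum/infimum inequalities over $\mathcal{P}$.
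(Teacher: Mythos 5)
Your proposal is correct and follows essentially the same route as the paper's proof: a case split on whether $\pi^A$ vanishes, the observation that the dynamic trading gains form a symmetric $G$-martingale started at zero (so their expectation vanishes), and the strict inequalities $\hat{\mathbb{E}}[\tilde{X}^A]>\tilde{X}_0^A>-\hat{\mathbb{E}}[-\tilde{X}^A]$ to force a sign contradiction with $\tilde{v}\geq0$ quasi-surely. Your formulation via $\inf_{P\in\mathcal{P}}\mathbb{E}_P[\tilde{v}]<0$ is just the paper's inequality $\hat{\mathbb{E}}[-\pi^A(\tilde{X}^A-\tilde{X}_0^A)]>0$ read from the other side, and the technical point you flag (that integrating bounded $M_G^2$ strategies against these symmetric $G$-martingales yields again a symmetric $G$-martingale) is exactly what the paper's appeal to sublinearity and the martingale representation is implicitly resting on.
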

\begin{proof}
We assume that there exists an arbitrage strategy $(\pi,\pi^S,\pi^A,\tau)$ and show that this yields a contradiction. We only examine the case in which $X^A$ is traded, i.e., it holds $\pi^A\neq0$; if $\pi^A=0$, the proof is similar to showing that the bond market is arbitrage-free \citep*[Proposition 4.1]{holzermann2021}. By the definition of arbitrage, it holds $\tilde{v}(\pi,\pi^S,\pi^A,\tau)\geq0$. Then the monotonicity of $\hat{\mathbb{E}}$ implies that
\begin{align*}
\hat{\mathbb{E}}\biggl[\sum_{i=1}^n\int_0^{\tau_i}\pi^i_td\tilde{P}_t(\tau_i)+\int_0^{\bar{T}}\pi_t^Sd\tilde{X}_t^S\biggr]\geq\hat{\mathbb{E}}[-\pi^A(\tilde{X}^A-\tilde{X}_0^A)].
\end{align*}
Due to the sublinearity of $\hat{\mathbb{E}}$ and the fact that the discounted bonds and the discounted price process of the symmetric contract are symmetric $G$-martingales under $\hat{\mathbb{E}}$, we have
\begin{align*}
\hat{\mathbb{E}}\biggl[\sum_{i=1}^n\int_0^{\tau_i}\pi^i_td\tilde{P}_t(\tau_i)+\int_0^{\bar{T}}\pi_t^Sd\tilde{X}_t^S\biggr]\leq0.
\end{align*}
Furthermore, if we use the properties of $\hat{\mathbb{E}}$ and the assumption on $\tilde{X}_0^A$, we get
\begin{align*}
\hat{\mathbb{E}}[-\pi^A(\tilde{X}^A-\tilde{X}_0^A)]=(\pi^A)^+(\hat{\mathbb{E}}[-\tilde{X}^A]+\tilde{X}_0^A)+(\pi^A)^-(\hat{\mathbb{E}}[\tilde{X}^A]-\tilde{X}_0^A)>0.
\end{align*}
Combining the previous steps, we obtain a contradiction.
\end{proof}
\begin{rem}\label{remark on pricing}
As a consequence of Proposition \ref{no arbitrage}, we can reduce the problem of pricing a contract to evaluating the upper and the lower expectation of its discounted payoff. Then the upper and the lower expectation yield the price of the contract if both coincide or otherwise, the upper and the lower bound for the price, respectively.
\end{rem}
\begin{rem}\label{remark on hedging}
The pricing-hedging duality in the presence of volatility uncertainty shows that prices differing from the pricing procedure in this section lead to arbitrage. Strictly speaking, Proposition \ref{no arbitrage} only shows that the pricing procedure in this section yields no-arbitrage prices but not that other prices create arbitrage opportunities. From the pricing-hedging duality under model uncertainty (see \citet*[Theorem 3.6]{vorbrink2014} for volatility uncertainty), we can deduce that the upper, respectively lower, expectation corresponds to the smallest superhedging, respectively highest subhedging, price. Hence, there exists an arbitrage strategy if the price of a contract is greater, respectively less, than the upper, respectively lower, expectation of its discounted payoff.
\par Moreover, the pricing-hedging duality under volatility uncertainty provides a (super)hedging strategy for contracts in fixed income markets. From the pricing-hedging duality, we can additionally infer that symmetric contracts can be hedged while asymmetric contracts have to be superhedged. In fact, the prices or the pricing bounds of many interest rate derivatives are given by the prices from the corresponding HJM model without volatility uncertainty (see Section \ref{common interest rate derivatives}); thus, we can use traditional hedging strategies to (super)hedge typical fixed income derivatives in the presence of volatility uncertainty.
\end{rem}

\section{Pricing Single Cashflows}\label{pricing single cashflows}
In the classical case without volatility uncertainty, discounted cashflows are priced under the forward measure. Evaluating the expectation of a discounted cashflow related to an interest rate derivative can be very elaborate; this is due to the fact that the discount factor---in addition to the cashflows---is stochastic. The common way to avoid this issue is the forward measure approach. The forward measure, which was introduced by \citet*{geman1989}, is equivalent to the pricing measure and defined by choosing a particular density process. The density process is defined in such a way that the expectation of a discounted cashflow under the risk-neutral measure can be rewritten as the expectation of the cashflow under the forward measure, discounted by a zero-coupon bond. Thus, by changing the measure, we can replace the stochastic discount factor by the current bond price (which is already determined by the model).
\par In the presence of volatility uncertainty, we define a counterpart of the forward measure, termed \textit{forward sublinear expectation}, to simplify the pricing of discounted cashflows. In contrast to the forward measure approach, we define the forward sublinear expectation by a $G$-BSDE.
\begin{df}\label{forward sublinear expectation}
For $\xi\in L_G^p(\Omega_T)$ with $p>1$ and $T\leq\bar{T}$, we define the $T$-forward sublinear expectation $\hat{\mathbb{E}}^T$ by $\hat{\mathbb{E}}_t^T[\xi]:=Y_t^{T,\xi}$, where $Y^{T,\xi}=(Y_t^{T,\xi})_{0\leq t\leq T}$ solves the $G$-BSDE
\begin{align*}
Y_t^{T,\xi}=\xi-\sum_{i=1}^d\int_t^Tb_u^i(T)Z_u^id\langle B^i\rangle_u-\sum_{i=1}^d\int_t^TZ_u^idB_u^i-(K_T-K_t).
\end{align*}
\end{df}
\noindent By Theorem 5.1 of \citet*{hujipengsong2014}, the forward sublinear expectation is a time consistent sublinear expectation. We refer to the paper of \citet*{hujipengsong2014} for further details related to $G$-BSDEs.
\par The forward sublinear expectation corresponds to the expectation under the forward measure. This can be deduced from the explicit solution to the $G$-BSDE defining the forward sublinear expectation. For $T\leq\bar{T}$, we define the process $X^T=(X_t^T)_{0\leq t\leq T}$ by
\begin{align*}
X_t^T:=\tfrac{\tilde{P}_t(T)}{P_0(T)}.
\end{align*}
The process $X^T$ is the density used to define the forward measure. One can verify that $X^T$ satisfies the dynamics
\begin{align*}
X_t^T=1-\sum_{i=1}^d\int_0^tb_u^i(T)X_u^TdB_u^i
\end{align*}
\citep*[Proposition 3.1]{holzermann2021'}. By Theorem 3.2 of \citet*{hujipengsong2014}, we know that the process $Y^{T,\xi}$ is given by
\begin{align*}
Y_t^{T,\xi}=(X_t^T)^{-1}\hat{\mathbb{E}}_t[X_T^T\xi].
\end{align*}
Thus, we basically arrive at the same expression as in the classical definition of the forward measure.
\par We obtain the following preliminary results related to the forward sublinear expectation, which simplify the pricing of discounted cashflows. Similar to the classical case, we find that pricing a discounted cashflow reduces to determining the forward sublinear expectation of the cashflow, which is then discounted with the bond price. Furthermore, there is a relation between forward sublinear expectations with different maturities, and the forward rate process and the forward price process, which is denoted by $X^{T,\tilde{T}}=(X_t^{T,\tilde{T}})_{0\leq t\leq T\wedge\tilde{T}}$ for $T,\tilde{T}\leq\bar{T}$ and defined by
\begin{align*}
X_t^{T,\tilde{T}}:=\tfrac{P_t(\tilde{T})}{P_t(T)},
\end{align*}
are symmetric $G$-martingales under the $T$-forward sublinear expectation.
\begin{prp}\label{preliminary results}
Let $\xi\in L_G^p(\Omega_T)$ with $p>1$ and let $t\leq T,\tilde{T}\leq\bar{T}$. Then we have the following properties.
\begin{enumerate}
\item[(i)] It holds
\begin{align*}
M_t\hat{\mathbb{E}}_t[M_T^{-1}\xi]=P_t(T)\hat{\mathbb{E}}_t^T[\xi].
\end{align*}
\item[(ii)] For $T\leq\tilde{T}$, it holds
\begin{align*}
P_t(\tilde{T})\hat{\mathbb{E}}_t^{\tilde{T}}[\xi]=P_t(T)\hat{\mathbb{E}}_t^T[P_T(\tilde{T})\xi].
\end{align*}
\item[(iii)] The process $f(T)$ is a symmetric $G$-martingale under $\hat{\mathbb{E}}^T$.
\item[(iv)] The process $X^{T,\tilde{T}}$ satisfies $X_t^{T,\tilde{T}}\in L_G^p(\Omega_t)$ for all $p<\infty$ and
\begin{align*}
X_t^{T,\tilde{T}}=X_0^{T,\tilde{T}}-\sum_{i=1}^d\int_0^t\sigma_u^i(T,\tilde{T})X_u^{T,\tilde{T}}dB_u^i-\sum_{i=1}^d\int_0^t\sigma_u^i(T,\tilde{T})X_u^{T,\tilde{T}}b_u^i(T)d\langle B^i\rangle_u,
\end{align*}
where $\sigma^i(T,\tilde{T})=(\sigma_t^i(T,\tilde{T}))_{0\leq t\leq T\wedge\tilde{T}}$, for all $i$, is defined by
\begin{align*}
\sigma_t^i(T,\tilde{T}):=b_t^i(\tilde{T})-b_t^i(T),
\end{align*}
and $X^{T,\tilde{T}}$ is a symmetric $G$-martingale under $\hat{\mathbb{E}}^T$.
\end{enumerate}
\end{prp}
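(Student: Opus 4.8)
The whole proposition rests on the explicit representation $\hat{\mathbb{E}}_t^T[\xi]=(X_t^T)^{-1}\hat{\mathbb{E}}_t[X_T^T\xi]$ recorded before the statement, together with one recurring device for coping with the sublinearity of $\hat{\mathbb{E}}$: whenever $\eta$ is $\mathcal{F}_s$-measurable and $\zeta\in L_G$ satisfies $\hat{\mathbb{E}}_s[\zeta]=-\hat{\mathbb{E}}_s[-\zeta]$ (conditional upper and lower expectations agree), the general identity $\hat{\mathbb{E}}_s[\eta\zeta]=\eta^+\hat{\mathbb{E}}_s[\zeta]+\eta^-\hat{\mathbb{E}}_s[-\zeta]$ collapses to $\hat{\mathbb{E}}_s[\eta\zeta]=\eta\,\hat{\mathbb{E}}_s[\zeta]$. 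Since the discounted bonds, and hence $X^T$, are symmetric $G$-martingales, this lets me pull \emph{signed} $\mathcal{F}_s$-measurable factors through $\hat{\mathbb{E}}$ exactly as in the linear theory, while positive $\mathcal{F}_t$-measurable factors such as $M_t$ and $P_t(T)$ come out of $\hat{\mathbb{E}}_t$ directly. For (i) I would then use $P_T(T)=1$, so that $X_T^T=\tilde P_T(T)/P_0(T)=M_T^{-1}/P_0(T)$; pulling the positive constant $1/P_0(T)$ out of $\hat{\mathbb{E}}_t$ gives $\hat{\mathbb{E}}_t[X_T^T\xi]=P_0(T)^{-1}\hat{\mathbb{E}}_t[M_T^{-1}\xi]$, and multiplying by $(X_t^T)^{-1}=P_0(T)M_t/P_t(T)$ yields $\hat{\mathbb{E}}_t^T[\xi]=(M_t/P_t(T))\hat{\mathbb{E}}_t[M_T^{-1}\xi]$, which is (i) after rearranging.

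For (ii) the plan is to apply (i) at both maturities and reduce to a statement purely about $\hat{\mathbb{E}}$. Part (i) for $\tilde T$ turns the left-hand side into $M_t\hat{\mathbb{E}}_t[M_{\tilde T}^{-1}\xi]$, and part (i) for $T$ (applied to $P_T(\tilde T)\xi$) turns the right-hand side into $M_t\hat{\mathbb{E}}_t[M_T^{-1}P_T(\tilde T)\xi]$; so it suffices to prove $\hat{\mathbb{E}}_t[M_{\tilde T}^{-1}\xi]=\hat{\mathbb{E}}_t[M_T^{-1}P_T(\tilde T)\xi]$. By time consistency I would condition the left-hand side on $\mathcal{F}_T$, whereupon the key identity to establish is $\hat{\mathbb{E}}_T[M_{\tilde T}^{-1}\xi]=M_T^{-1}P_T(\tilde T)\,\xi$. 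This is where the recurring device enters: since $\tilde P_t(\tilde T)=\hat{\mathbb{E}}_t[M_{\tilde T}^{-1}]$ is a symmetric $G$-martingale, the conditional upper and lower expectations of $M_{\tilde T}^{-1}$ agree at time $T$, so the $\mathcal{F}_T$-measurable factor $\xi$ may be pulled out; combining this with $P_T(\tilde T)=M_T\hat{\mathbb{E}}_T[M_{\tilde T}^{-1}]$ gives the identity, and taking $\hat{\mathbb{E}}_t[\,\cdot\,]$ of both sides finishes (ii).

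For (iii) and (iv) I would first record the principle that $N$ is a symmetric $G$-martingale under $\hat{\mathbb{E}}^T$ if and only if $X^TN$ is a symmetric $G$-martingale under $\hat{\mathbb{E}}$, which follows by extending the representation to intermediate times, $\hat{\mathbb{E}}_t^T[N_s]=(X_t^T)^{-1}\hat{\mathbb{E}}_t[X_s^TN_s]$ for $t\le s\le T$, again via the symmetry of $X^T$ and the device above. For (iii), using $\alpha\equiv0$, $\gamma_t^i(T)=\beta_t^i(T)b_t^i(T)$, and $dX_t^T=-\sum_ib_t^i(T)X_t^TdB_t^i$, the $G$-Itô product rule for $X^Tf(T)$ produces a quadratic-variation drift $X_t^T\sum_i\beta_t^i(T)b_t^i(T)d\langle B^i\rangle_t$ from $df(T)$ and an equal, opposite cross-variation term $-\sum_ib_t^i(T)\beta_t^i(T)X_t^Td\langle B^i\rangle_t$; these cancel exactly, leaving a pure $dB$-integral, so $X^Tf(T)$ is a symmetric $G$-martingale, the integrand lying in $M_G^2$ by the deterministic boundedness of $\beta^i(T)$ and the $L_G^p$-bounds on $X^T$ and $f(T)$. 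For (iv) I would obtain the stated linear $G$-SDE by the quotient rule applied to $X^{T,\tilde T}=P(\tilde T)/P(T)$, starting from $dP_t(S)=P_t(S)(r_t\,dt-\sum_ib_t^i(S)dB_t^i)$ for $S\in\{T,\tilde T\}$: the $r_t\,dt$ terms cancel, the $dB^i$-coefficient becomes $-\sigma_t^i(T,\tilde T)X_t^{T,\tilde T}$, and the self- and cross-variation terms combine into $-\sigma_t^i(T,\tilde T)b_t^i(T)X_t^{T,\tilde T}d\langle B^i\rangle_t$, matching the claim; $L_G^p$-membership for all $p$ then follows from the standard linear-$G$-SDE estimate with bounded deterministic coefficients. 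The symmetric-martingale claim is then immediate from the principle above once I note $X_t^TX_t^{T,\tilde T}=\tfrac{P_0(\tilde T)}{P_0(T)}X_t^{\tilde T}$, exhibiting $X^TX^{T,\tilde T}$ as a positive multiple of the symmetric $G$-martingale $X^{\tilde T}$.

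I expect the main obstacle to be conceptual rather than computational: systematically justifying the passage of signed factors through the nonlinear $\hat{\mathbb{E}}$, for which the symmetry of the relevant $G$-martingales is indispensable, and verifying the $M_G^2$ and $L_G^p$ integrability needed to legitimately invoke the $G$-Itô calculus and the martingale representation. The algebraic cancellations in (iii) and (iv) are the routine part, and they hinge entirely on the drift condition $\gamma^i=\beta^ib^i$.
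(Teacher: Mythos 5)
Your argument is sound and reaches all four conclusions, but from part (ii) onward it travels a genuinely different road than the paper; part (i) is identical. For (ii) the paper never leaves the BSDE picture: since $\xi$ is $\mathcal{F}_T$-measurable, the solution of the $\tilde{T}$-BSDE restricted to $[0,T]$ solves the same equation with terminal time $T$, the explicit representation of Hu--Ji--Peng--Song applies there, and the factorization $X_t^{\tilde{T}}=X_t^{T,\tilde{T}}X_0^{\tilde{T},T}X_t^T$ finishes the job. Your reduction via two applications of (i), the tower property, and $\hat{\mathbb{E}}_T[M_{\tilde{T}}^{-1}\xi]=\xi M_T^{-1}P_T(\tilde{T})$ is equivalent, but it shifts the burden onto the pull-out identity $\hat{\mathbb{E}}_T[\eta\zeta]=\eta^+\hat{\mathbb{E}}_T[\zeta]+\eta^-\hat{\mathbb{E}}_T[-\zeta]$, which in the $G$-framework is stated for \emph{bounded} $\eta$; for a general $\xi\in L_G^p(\Omega_T)$ you owe a truncation/approximation step --- exactly the technicality the BSDE route sidesteps. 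For (iii) and (iv) the paper invokes the Girsanov transformation of Hu--Ji--Peng--Song: $B_t^{i,T}=B_t^i+\int_0^tb_u^i(T)d\langle B^i\rangle_u$ is a $G$-Brownian motion under $\hat{\mathbb{E}}^T$, so any process whose $d\langle B^i\rangle$-drift is minus its $dB^i$-coefficient times $b^i(T)$ is immediately a symmetric $G$-martingale under $\hat{\mathbb{E}}^T$; this packages, once and for all, the cancellation you produce by hand in the It\^o computation for $X^Tf(T)$ and in your Bayes-rule principle (``$N$ symmetric under $\hat{\mathbb{E}}^T$ iff $X^TN$ symmetric under $\hat{\mathbb{E}}$''), which is correct and self-contained given the representation formula extended to intermediate times. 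Finally, for the $L_G^p(\Omega_t)$-membership in (iv) the paper works harder than you do: it uses the explicit exponential form of $X^{T,\tilde{T}}$, H\"older's inequality, and Novikov's condition to verify the defining property $\lim_{n\rightarrow\infty}\hat{\mathbb{E}}[\vert X\vert^p1_{\{\vert X\vert>n\}}]=0$, which is more than moment-finiteness; your appeal to linear $G$-SDE estimates with bounded deterministic coefficients is defensible but should be spelled out to that level of detail to match the paper's rigor.
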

\begin{proof}
Part $(i)$ follows by a simple calculation; we have
\begin{align*}
M_t\hat{\mathbb{E}}_t[M_T^{-1}\xi]=P_t(T)M_t\tfrac{P_0(T)}{P_t(T)}\hat{\mathbb{E}}_t[M_T^{-1}\tfrac{P_T(T)}{P_0(T)}\xi]
=P_t(T)(X_t^T)^{-1}\hat{\mathbb{E}}_t[X_T^T\xi]
=P_t(T)\hat{\mathbb{E}}_t^T[\xi].
\end{align*}
\par To show part $(ii)$, we use some properties of $G$-BSDEs. By Definition \ref{forward sublinear expectation}, we have $\hat{\mathbb{E}}_t^{\tilde{T}}[\xi]=Y_t^{\tilde{T},\xi}$, where $Y^{\tilde{T},\xi}$ solves
\begin{align*}
Y_t^{\tilde{T},\xi}=\xi-\sum_{i=1}^d\int_t^{\tilde{T}}b_u^i(\tilde{T})Z_u^id\langle B^i\rangle_u-\sum_{i=1}^d\int_t^{\tilde{T}}Z_u^idB_u^i-(K_{\tilde{T}}-K_t).
\end{align*}
Since $\xi\in L_G^p(\Omega_T)$, the process $Y^{\tilde{T},\xi}$ also solves the $G$-BSDE
\begin{align*}
Y_t^{\tilde{T},\xi}=\xi-\sum_{i=1}^d\int_t^Tb_u^i(\tilde{T})Z_u^id\langle B^i\rangle_u-\sum_{i=1}^d\int_t^TZ_u^idB_u^i-(K_T-K_t).
\end{align*}
By Theorem 3.2 of \citet*{hujipengsong2014}, the solution to the latter is given by
\begin{align*}
Y_t^{\tilde{T},\xi}=(X_t^{\tilde{T}})^{-1}\hat{\mathbb{E}}_t[X_T^{\tilde{T}}\xi].
\end{align*}
Moreover, for each $t\leq T$, we have $X_t^{\tilde{T}}=X_t^{T,\tilde{T}}X_0^{\tilde{T},T}X_t^T$. Hence, we obtain
\begin{align*}
\hat{\mathbb{E}}_t^{\tilde{T}}[\xi]=X_t^{\tilde{T},T}X_0^{T,\tilde{T}}(X_t^T)^{-1}\hat{\mathbb{E}}_t[X_T^{T,\tilde{T}}X_0^{\tilde{T},T}X_T^T\xi]=X_t^{\tilde{T},T}\hat{\mathbb{E}}_t^T[X_T^{T,\tilde{T}}\xi],
\end{align*}
which proves part $(ii)$.
\par For part $(iii)$, we use the Girsanov transformation for $G$-Brownian motion from \citet*{hujipengsong2014}. We define the process $B^T=(B_t^{1,T},...,B_t^{d,T})_{0\leq t\leq T}$ by
\begin{align*}
B_t^{i,T}:=B_t^i+\int_0^tb_u^i(T)d\langle B^i\rangle_u.
\end{align*}
Then $B^T$ is a $G$-Brownian motion under $\hat{\mathbb{E}}^T$ \citep*[Theorems 5.2, 5.4]{hujipengsong2014}. Since the dynamics of the forward rate are given by
\begin{align*}
f_t(T)=f_0(T)+\sum_{i=1}^d\int_0^t\beta_u^i(T)dB_u^i+\sum_{i=1}^d\int_0^t\beta_u^i(T)b_u^i(T)d\langle B^i\rangle_u,
\end{align*}
the forward rate is a symmetric $G$-martingale under $\hat{\mathbb{E}}^T$.
\par To obtain part $(iv)$, we first show that $X_t^{T,\tilde{T}}\in L_G^p(\Omega_t)$ for all $p<\infty$ by using the representation of the space $L_G^p(\Omega_t)$ from \citet*{denishupeng2011} and a proof similar to the proof of Proposition 5.10 from \citet*{osuka2013}. The space $L_G^p(\Omega_t)$ consists of all Borel measurable random variables $X$ which have a quasi-continuous version and satisfy $\lim_{n\rightarrow\infty}\hat{\mathbb{E}}[\vert X\vert^p1_{\{\vert X\vert>n\}}]=0$ \citep*[Proposition 6.3.2]{peng2019}. One can show that
\begin{align*}
X_t^{T,\tilde{T}}=X_0^{T,\tilde{T}}\exp\biggl(-\sum_{i=1}^d\int_0^t\sigma_u^i(T,\tilde{T})dB_u^i-\sum_{i=1}^d\int_0^t\big(\tfrac{1}{2}\sigma_u^i(T,\tilde{T})^2+\sigma_u^i(T,\tilde{T})b_u^i(T)\big)d\langle B^i\rangle_u\biggr)
\end{align*}
\citep*[Lemma 3.1]{holzermann2021'}. Since $\sigma^i(T,\tilde{T})$ and $b^i(T)$, for all $i$, are bounded processes in $M_G^p(0,\bar{T})$ for all $p<\infty$, we already know that $X_t^{T,\tilde{T}}$ is measurable and has a quasi-continuous version. Now we show that $\hat{\mathbb{E}}[\vert X_t^{T,\tilde{T}}\vert^{\tilde{p}}]<\infty$ for $\tilde{p}>p$, which implies $\lim_{n\rightarrow\infty}\hat{\mathbb{E}}[\vert X\vert^p1_{\{\vert X\vert>n\}}]=0$. By H\"older's inequality, for $\tilde{p}>p$ and $\tilde{q}>1$, we have
\begin{align*}
\hat{\mathbb{E}}[\vert X_t^{T,\tilde{T}}\vert^{\tilde{p}}]\leq{}&X_0^{T,\tilde{T}}\hat{\mathbb{E}}\biggl[\exp\biggl(-\tilde{p}\tilde{q}\sum_{i=1}^d\int_0^t\sigma_u^i(T,\tilde{T})dB_u^i-\tfrac{1}{2}(\tilde{p}\tilde{q})^2\sum_{i=1}^d\int_0^t\sigma_u^i(T,\tilde{T})^2d\langle B^i\rangle_u\biggr)\biggr]^\frac{1}{\tilde{q}}
\\&\times\hat{\mathbb{E}}\biggl[\exp\biggl(\tfrac{\tilde{p}\tilde{q}}{\tilde{q}-1}\sum_{i=1}^d\int_0^t\big(\tfrac{1}{2}(\tilde{p}\tilde{q}-1)\sigma_u^i(T,\tilde{T})^2-\sigma_u^i(T,\tilde{T})b_u^i(T)\big)d\langle B^i\rangle_u\biggr)\biggr]^\frac{\tilde{q}-1}{\tilde{q}}.
\end{align*}
The two terms on the right-hand side are finite. The second term is finite since $\sigma^i(T,\tilde{T})$ and $b^i(T)$ are bounded for all $i$. By the same argument, we have
\begin{align*}
\hat{\mathbb{E}}\biggl[\exp\biggl(\tfrac{1}{2}(\tilde{p}\tilde{q})^2\sum_{i=1}^d\int_0^t\sigma_u^i(T,\tilde{T})^2d\langle B^i\rangle_u\biggr)\biggr]<\infty.
\end{align*}
Then we can use Novikov's condition to show that the first term is finite, since the exponential inside the sublinear expectation is a martingale under each $P\in\mathcal{P}$.
\par Using It\^o's formula for $G$-Brownian motion from \citet*{lipeng2011} and the Girsanov transformation of \citet*{hujipengsong2014} completes the proof. We have
\begin{align*}
X_t^{T,\tilde{T}}=X_0^{T,\tilde{T}}-\sum_{i=1}^d\int_0^t\sigma_u^i(T,\tilde{T})X_u^{T,\tilde{T}}dB_u^i-\sum_{i=1}^d\int_0^t\sigma_u^i(T,\tilde{T})X_u^{T,\tilde{T}}b_u^i(T)d\langle B^i\rangle_u
\end{align*}
by It\^o's formula \citep*[Theorem 5.4]{lipeng2011}. Moreover, since $\sigma^i(T,\tilde{T})$ and $b^i(T)$, for all $i$, are bounded processes in $M_G^p(0,\bar{T})$ for all $p<\infty$, one can then show that $X^{T,\tilde{T}}$ belongs to $M_G^p(0,\bar{T})$ for all $p<\infty$ \citep*[Proposition B.1]{holzermann2021'}. As in the proof of part $(iii)$, the Girsanov transformation then implies that $X^{T,\tilde{T}}$ is a symmetric $G$-martingale under $\hat{\mathbb{E}}^T$.
\end{proof}
\par Due to Proposition \ref{preliminary results} $(iii)$, we obtain a robust version of the expectations hypothesis as a by-product. The traditional expectations hypothesis states that forward rates reflect the expectation of future short rates. In the classical case without volatility uncertainty, the forward rate is a martingale under the forward measure; therefore, the expectations hypothesis holds true under the forward measure. In our case, we obtain a much stronger version---called \textit{robust expectations hypothesis}. The forward rate is a symmetric $G$-martingale under the forward sublinear expectation; thus, the forward rate reflects the upper and the lower expectation of the short rate.
\begin{cor}\label{robust expectations hypothesis}
The forward rate satisfies the robust expectations hypothesis under the forward sublinear expectation---that is, for $t\leq T\leq\bar{T}$, it holds
\begin{align*}
\hat{\mathbb{E}}_t^T[r_T]=f_t(T)=-\hat{\mathbb{E}}_t^T[-r_T].
\end{align*}
\end{cor}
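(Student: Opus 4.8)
The plan is to obtain the statement as an immediate consequence of part $(iii)$ of Proposition \ref{preliminary results}. The crucial observation is that, by the definition of the short rate, $r_T=f_T(T)$; that is, the terminal value of the forward rate process $f(T)$ coincides with the short rate at time $T$. Hence the corollary is really just a statement about evaluating the (symmetric) $G$-martingale property of $f(T)$ at the endpoint $T$, and no genuinely new analysis is required.

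First I would unpack what it means for $f(T)$ to be a \emph{symmetric} $G$-martingale under $\hat{\mathbb{E}}^T$: both $f(T)$ and $-f(T)$ are $G$-martingales under $\hat{\mathbb{E}}^T$. Written out, for all $t\leq s\leq T$ this gives $\hat{\mathbb{E}}_t^T[f_s(T)]=f_t(T)$ together with $\hat{\mathbb{E}}_t^T[-f_s(T)]=-f_t(T)$, the latter being equivalent to $-\hat{\mathbb{E}}_t^T[-f_s(T)]=f_t(T)$. Combining the two yields the symmetric identity $\hat{\mathbb{E}}_t^T[f_s(T)]=f_t(T)=-\hat{\mathbb{E}}_t^T[-f_s(T)]$ for every $s\in[t,T]$.

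Next I would specialize to the endpoint $s=T$ and substitute $f_T(T)=r_T$, which turns the display above into $\hat{\mathbb{E}}_t^T[r_T]=f_t(T)=-\hat{\mathbb{E}}_t^T[-r_T]$. This is exactly the claimed robust expectations hypothesis, and it simultaneously recovers the two-sided bound, reflecting that under $\hat{\mathbb{E}}^T$ the upper and lower expectations of the short rate both agree with the forward rate (in contrast to a merely one-sided martingale inequality).

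I expect essentially no obstacle beyond this bookkeeping, since the substantive work---showing via the Girsanov transformation of \citet*{hujipengsong2014} that $f(T)$ is a symmetric $G$-martingale under the forward sublinear expectation---has already been done in Proposition \ref{preliminary results} $(iii)$. The only minor points worth checking are that $r_T=f_T(T)$ lies in a suitable space $L_G^p(\Omega_T)$ with $p>1$, so that $\hat{\mathbb{E}}_t^T[r_T]$ is well defined in the sense of Definition \ref{forward sublinear expectation} (this follows from the boundedness and continuity of the deterministic coefficients $\beta^i$ and the explicit diffusion form of $f(T)$), and that the $G$-martingale property is legitimately applied at the terminal time $s=T$ rather than only for $s<T$.
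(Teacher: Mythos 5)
Your proposal is correct and matches the paper's own (implicit) argument exactly: the corollary is stated as an immediate by-product of Proposition \ref{preliminary results} $(iii)$, obtained by unpacking the symmetric $G$-martingale property of $f(T)$ under $\hat{\mathbb{E}}^T$ at the terminal time and substituting $r_T=f_T(T)$. The extra care you take about $r_T$ lying in $L_G^p(\Omega_T)$ is a sensible bookkeeping check that the paper leaves tacit.
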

\noindent So in particular, the forward rate reflects the expectation of the short rate in each possible scenario for the volatility.
\par Next, we consider an option written on forward prices. The cashflows of most nonlinear contracts in fixed income markets can be written as bond options or, equivalently, as options on forward prices (see, e.g., Subsection \ref{swaptions}). Thus, we now consider the case when the payoff is given by a function depending on a selection of forward prices for different maturities: for $n\in\mathbb{N}$, let $\xi$ be defined by
\begin{align}\label{definition of xi}
\xi:=\varphi\big((X_{t_1}^{T,t_i})_{i=1}^n\big)
\end{align}
for a function $\varphi:\mathbb{R}^n\rightarrow\mathbb{R}$ and a tenor structure $0<t_1<...<t_n\leq\bar{T}$ with $t_1\leq T\leq\bar{T}$.
\par The price of such an option is characterized by a nonlinear PDE. By using a nonlinear version of the Feynman-Kac formula, we find that evaluating the forward sublinear expectation of the payoff reduces to solving a nonlinear PDE.
\begin{prp}\label{pricing a bond option}
Let $\xi$ be given by \eqref{definition of xi}. If $\varphi$ satisfies
\begin{align}\label{estimate}
\vert\varphi(x)-\varphi(y)\vert\leq C(1+\vert x\vert^m+\vert y\vert^m)\vert x-y\vert
\end{align}
for a positive integer $m$ and a constant $C>0$, then for $t\leq t_1$,
\begin{align*}
\hat{\mathbb{E}}_t^T[\xi]=u\big(t,(X_t^{T,t_i})_{i=1}^n\big),
\end{align*}
where $u:[0,t_1]\times\mathbb{R}^n\rightarrow\mathbb{R}$ is the unique viscosity solution to the nonlinear PDE
\begin{align}\label{pde}
\begin{split}
\partial_tu+\tfrac{1}{2}\sum_{j=1}^d\Big(\overline{\sigma}_j^2\big(\sigma^j(t,x)D_{xx}^2u\,\sigma^j(t,x)'\big)^+-\underline{\sigma}_j^2\big(\sigma^j(t,x)D_{xx}^2u\,\sigma^j(t,x)'\big)^-\Big)={}&0,
\\u(t_1,x)={}&\varphi(x)
\end{split}
\end{align}
and $\sigma^j(t,x):=(\sigma_t^j(T,t_i)x_i)_{i=1}^n$.
\end{prp}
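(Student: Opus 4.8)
The plan is to reduce the computation of the forward sublinear expectation to a standard $G$-expectation of a terminal functional of a Markovian $G$-SDE, to which the nonlinear Feynman--Kac formula applies. The first step is to pass to the forward picture via the Girsanov transformation from Proposition \ref{preliminary results} $(iii)$: recall that $B^T$, defined by $B_t^{i,T}=B_t^i+\int_0^tb_u^i(T)d\langle B^i\rangle_u$, is a $G$-Brownian motion under $\hat{\mathbb{E}}^T$, so that $\hat{\mathbb{E}}^T$ is the $G$-expectation associated with $B^T$. Substituting $dB_u^i=dB_u^{i,T}-b_u^i(T)d\langle B^i\rangle_u$ into the dynamics from Proposition \ref{preliminary results} $(iv)$, the two quadratic-variation terms cancel and each forward price satisfies, under $\hat{\mathbb{E}}^T$,
\begin{align*}
X_t^{T,t_i}=X_0^{T,t_i}-\sum_{j=1}^d\int_0^t\sigma_u^j(T,t_i)X_u^{T,t_i}dB_u^{j,T}.
\end{align*}
Thus the vector process $\mathbf{X}=(X^{T,t_1},\dots,X^{T,t_n})$ is a driftless $G$-SDE driven by $B^T$, with diffusion coefficient $x_i\mapsto-\sigma_t^j(T,t_i)x_i$ in direction $j$.

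Next I would record that this system fits the framework of the nonlinear Feynman--Kac formula. The coefficients $\sigma_t^j(T,t_i)$ are deterministic and continuous in $t$ (since each $\beta^i$ is continuous) and linear, hence Lipschitz, in the state; in particular the system is Markovian and well-posed. Combined with the $L_G^p$ estimates from Proposition \ref{preliminary results} $(iv)$ and the polynomial-Lipschitz bound \eqref{estimate} on $\varphi$, one checks that $\xi=\varphi\big((X_{t_1}^{T,t_i})_{i=1}^n\big)\in L_G^p(\Omega_{t_1})$ for $p>1$ and that the moment estimates for the state process required by the theorem hold.

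Applying the nonlinear Feynman--Kac formula \citep*{peng2019} to $\hat{\mathbb{E}}_t^T[\xi]$, interpreted as the $G$-expectation of $\varphi(\mathbf{X}_{t_1})$ for the $B^T$-driven system, yields $\hat{\mathbb{E}}_t^T[\xi]=u(t,\mathbf{X}_t)$, where $u$ is the unique viscosity solution of the associated fully nonlinear PDE with terminal datum $\varphi$. It then remains to identify the spatial operator with \eqref{pde}. Since the SDE has no drift, only the second-order part survives, taking the form $G$ evaluated at the $d\times d$ matrix whose $(j,k)$ entry is the bilinear form of $D_{xx}^2u$ against the $j$-th and $k$-th diffusion vectors. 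Because the uncertainty set $\Sigma$ consists only of diagonal matrices, $G$ depends solely on the diagonal entries $\sigma^j(t,x)D_{xx}^2u\,\sigma^j(t,x)'$, and inserting the definition of $G$ produces exactly \eqref{pde}; the sign in front of $\sigma_t^j(T,t_i)x_i$ is irrelevant, as the coefficient enters quadratically. Uniqueness of the viscosity solution follows from the comparison principle for this class of nonlinear parabolic PDEs.

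I expect the main obstacle to be the verification that the problem genuinely falls under the nonlinear Feynman--Kac theorem: confirming the $L_G^p$ integrability of $\xi$ and the moment bounds for $\mathbf{X}$ under only the polynomial-Lipschitz condition \eqref{estimate}, rather than a global Lipschitz or boundedness assumption on $\varphi$. Once integrability is secured, the identification of the generator with $G$ and the passage to \eqref{pde} are routine consequences of the diagonal structure of $\Sigma$ and the absence of drift.
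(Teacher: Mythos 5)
Your argument is correct, and it reaches the same PDE by a route that is organized differently from the paper's. The paper does not perform the Girsanov transformation explicitly at this point: it keeps the $G$-BSDE defining $\hat{\mathbb{E}}^T$ with its linear driver $-\sum_i b_u^i(T)Z_u^i\,d\langle B^i\rangle_u$, restricts the terminal time from $T$ to $t_1$ (using $\xi\in L_G^p(\Omega_{t_1})$), pairs it with the forward-price dynamics from Proposition \ref{preliminary results} $(iv)$ \emph{including} their quadratic-variation drift, and invokes the Feynman--Kac theorems for Markovian forward-backward systems (Theorems 4.4 and 4.5 of Hu--Ji--Peng--Song); the cancellation you obtain by substituting $dB^i=dB^{i,T}-b^i(T)\,d\langle B^i\rangle$ happens there inside the FBSDE-to-PDE correspondence, which is why no first-order term appears in \eqref{pde}. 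Your version---change coordinates first so that $\mathbf{X}$ becomes a driftless Markovian $G$-SDE driven by $B^T$, then apply the driver-free Feynman--Kac formula under $\hat{\mathbb{E}}^T$---is conceptually cleaner and makes the absence of a drift term in \eqref{pde} transparent, but it buys this at the cost of one extra justification you should make explicit: the standard Feynman--Kac statements are formulated on the original $G$-expectation space, so you must argue (via Theorems 5.2 and 5.4 of Hu--Ji--Peng--Song) that $(\hat{\mathbb{E}}^T,B^T)$ is itself a $G$-expectation space with the same generator $G$, with $\langle B^{i,T}\rangle=\langle B^i\rangle$, and with the relevant $M_G^p$ and $L_G^p$ spaces identified, so that the SDE for $\mathbf{X}$ is well posed and the formula applies there. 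You also correctly isolate the genuine technical content---the $L_G^p$ integrability of $\xi$ and the moment bounds on $\mathbf{X}$ under the polynomial-Lipschitz condition \eqref{estimate}---which is exactly what the paper extracts from Proposition \ref{preliminary results} $(iv)$; and your observations that only the diagonal entries of the diffusion's quadratic form enter $G$ and that the sign of $\sigma_t^j(T,t_i)x_i$ is immaterial are both right.
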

\begin{proof}
We show the assertion by using the nonlinear Feynman-Kac formula of \citet*{hujipengsong2014}. With Proposition \ref{preliminary results} $(iv)$ and inequality \eqref{estimate}, one can show that $\xi$ belongs to $L_G^p(\Omega_{t_1})\subset L_G^p(\Omega_T)$ with $p>1$. By Definition \ref{forward sublinear expectation}, we have $\hat{\mathbb{E}}_t^T[\xi]=Y_t^{T,\xi}$, where $Y^{T,\xi}=(Y_t^{T,\xi})_{0\leq t\leq T}$ solves the $G$-BSDE
\begin{align*}
Y_t^{T,\xi}=\xi-\sum_{i=1}^d\int_t^Tb_u^i(T)Z_u^id\langle B^i\rangle_u-\sum_{i=1}^d\int_t^TZ_u^idB_u^i-(K_T-K_t).
\end{align*}
Since $\xi\in L_G^p(\Omega_{t_1})$, the process $Y^{T,\xi}$ also solves the $G$-BSDE
\begin{align*}
Y_t^{T,\xi}=\xi-\sum_{i=1}^d\int_t^{t_1}b_u^i(T)Z_u^id\langle B^i\rangle_u-\sum_{i=1}^d\int_t^{t_1}Z_u^idB_u^i-(K_{t_1}-K_t),
\end{align*}
where $\varphi$ satisfies \eqref{estimate}. From Proposition \ref{preliminary results} $(iv)$, we deduce the dynamics and the regularity of $X^{T,t_i}$ for all $i=1,...,n$. Then, by Theorems 4.4 and 4.5 of \citet*{hujipengsong2014}, we have $Y_t^{T,\xi}=u(t,(X_t^{T,t_i})_{i=1}^n)$, where $u:[0,t_1]\times\mathbb{R}^n\rightarrow\mathbb{R}$ is the unique viscosity solution to \eqref{pde}.
\end{proof}
\par When the option's payoff is additionally convex or concave, the price is characterized by the price from the corresponding HJM model without volatility uncertainty. If the payoff function is convex, respectively concave, then we can show that the forward sublinear expectation corresponds to the linear expectation of the payoff when the dynamics of the forward prices are driven by a standard Brownian motion with constant volatility $\overline{\sigma}$, respectively $\underline{\sigma}$.
\begin{prp}\label{pricing a convex bond option}
Let $\xi$ be given by \eqref{definition of xi}. If $\varphi$ is convex and satisfies \eqref{estimate}, then
\begin{align*}
\hat{\mathbb{E}}_t^T[\xi]=u^{\overline{\sigma}}\big(t,(X_t^{T,t_i})_{i=1}^n\big),
\end{align*}
for $t\leq t_1$, where the function $u^\sigma:[0,t_1]\times\mathbb{R}^n\rightarrow\mathbb{R}$, for $\sigma\in\Sigma$, is defined by
\begin{align*}
u^\sigma(t,x):=\mathbb{E}_{P_0}\big[\varphi\big((X_{t_1}^i)_{i=1}^n\big)\big]
\end{align*}
and the process $X^i=(X_s^i)_{t\leq s\leq t_1}$, for all $i=1,...,n$, is given by
\begin{align*}
X_s^i=x_i-\sum_{j=1}^d\int_t^s\sigma_u^j(T,t_i)X_u^i\sigma_jdB_u^j.
\end{align*}
If $\varphi$ is concave instead of convex, then for $t\leq t_1$,
\begin{align*}
\hat{\mathbb{E}}_t^T[\xi]=u^{\underline{\sigma}}\big(t,(X_t^{T,t_i})_{i=1}^n\big).
\end{align*}
\end{prp}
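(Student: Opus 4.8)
The plan is to leverage Proposition \ref{pricing a bond option}, which already identifies $\hat{\mathbb{E}}_t^T[\xi]$ with $u(t,(X_t^{T,t_i})_{i=1}^n)$ for the unique viscosity solution $u$ of the fully nonlinear PDE \eqref{pde}. Since $u^{\overline{\sigma}}$ is by definition a linear (classical) expectation, the whole statement reduces to proving that $u^{\overline{\sigma}}$ is itself a viscosity solution of \eqref{pde}; uniqueness then forces $u=u^{\overline{\sigma}}$. The mechanism is that the $G$-nonlinearity in \eqref{pde}, namely $\overline{\sigma}_j^2(\cdot)^+-\underline{\sigma}_j^2(\cdot)^-$ applied to the quadratic forms $\sigma^j(t,x)D_{xx}^2u\,\sigma^j(t,x)'$, collapses to the purely linear $\overline{\sigma}$-operator precisely when these quadratic forms are nonnegative, which is guaranteed by convexity of the solution in $x$.

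First I would establish convexity of $u^{\overline{\sigma}}(t,\cdot)$. The key observation is that the SDE for $X^i$ is linear in its initial value $x_i$, so for a fixed Brownian path the map $x\mapsto(X_{t_1}^i)_{i=1}^n$ is linear. Composing the convex $\varphi$ with this linear map gives a convex (random) function of $x$, and taking the expectation $\mathbb{E}_{P_0}$ preserves convexity. Hence $x\mapsto u^{\overline{\sigma}}(t,x)$ is convex for every $t\le t_1$. By the classical Feynman-Kac formula, $u^{\overline{\sigma}}$ is moreover the solution of the linear PDE $\partial_tu+\tfrac{1}{2}\sum_{j=1}^d\overline{\sigma}_j^2\,\sigma^j(t,x)D_{xx}^2u\,\sigma^j(t,x)'=0$ with terminal datum $\varphi$.

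Next I would verify that, thanks to convexity, this linear equation coincides with \eqref{pde}. For a smooth convex function the Hessian $D_{xx}^2u^{\overline{\sigma}}$ is positive semidefinite, so each quadratic form $\sigma^j(t,x)D_{xx}^2u^{\overline{\sigma}}\,\sigma^j(t,x)'$ is nonnegative; then $(\cdot)^+$ equals the form itself and $(\cdot)^-$ vanishes, so the $G$-operator reduces exactly to the $\overline{\sigma}$-operator that $u^{\overline{\sigma}}$ already satisfies. Therefore $u^{\overline{\sigma}}$ solves \eqref{pde}, and the uniqueness part of Proposition \ref{pricing a bond option} yields $\hat{\mathbb{E}}_t^T[\xi]=u^{\overline{\sigma}}(t,(X_t^{T,t_i})_{i=1}^n)$. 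The concave case is symmetric: a concave $\varphi$ makes $u^{\underline{\sigma}}$ concave, the relevant quadratic forms become nonpositive, so $(\cdot)^+$ vanishes and $-\underline{\sigma}_j^2(\cdot)^-$ reproduces the $\underline{\sigma}$-operator, giving $u=u^{\underline{\sigma}}$.

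The main obstacle is making the Hessian-sign argument rigorous at the level of viscosity solutions, since $u^{\overline{\sigma}}$ need not be $C^2$. The clean way around this is a smoothing argument: approximate $\varphi$ by smooth convex functions $\varphi_\varepsilon$ (e.g.\ by mollification, which preserves convexity and the growth bound \eqref{estimate}), so that the associated $u^{\overline{\sigma}}_\varepsilon$ are $C^{1,2}$, convex in $x$, and classically solve both the linear $\overline{\sigma}$-equation and \eqref{pde}. One then passes to the limit $\varepsilon\to0$ using the stability of viscosity solutions together with the continuity of $\varphi\mapsto u^{\overline{\sigma}}$ inherited from the growth estimate \eqref{estimate} and the moment bounds on $X^i$ coming from Proposition \ref{preliminary results} $(iv)$. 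A secondary technical point is checking that $\xi\in L_G^p(\Omega_{t_1})$ and that $u^{\overline{\sigma}}$ inherits the polynomial growth required for the comparison principle behind the uniqueness in \eqref{pde}, both of which follow from \eqref{estimate} and the $L_G^p$-integrability of the forward prices.
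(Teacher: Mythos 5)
Your proposal is correct and follows essentially the same route as the paper: show via the classical Feynman--Kac formula and the convexity of $\varphi$ (hence of $u^{\overline{\sigma}}(t,\cdot)$) that the quadratic forms $\sigma^j(t,x)D_{xx}^2u^{\overline{\sigma}}\,\sigma^j(t,x)'$ are nonnegative, so that $u^{\overline{\sigma}}$ solves the nonlinear PDE \eqref{pde}, and then invoke the uniqueness in Proposition \ref{pricing a bond option}. Your additional remarks on establishing convexity through the linearity of the SDE in its initial value and on mollifying $\varphi$ to justify the Hessian-sign argument at the viscosity level simply fill in details the paper leaves to the reader with ``one can verify.''
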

\begin{proof}
We show that $u^{\overline{\sigma}}$ solves the nonlinear PDE \eqref{pde} and apply Proposition \ref{pricing a bond option} to prove the first assertion; the proof of the second assertion is analogous. By the classical Feynman-Kac formula, we know that $u_{\overline{\sigma}}$ solves
\begin{align*}
\partial_tu+\tfrac{1}{2}\sum_{j=1}^d{\overline{\sigma}}_j^2\sigma^j(t,x)D_{xx}^2u\,\sigma^j(t,x)'=0,\quad u(t_1,x)=\varphi(x).
\end{align*}
In addition, the convexity of $\varphi$ implies that $u^{\overline{\sigma}}(t,\cdot)$ is convex for each $t$; thus,
\begin{align*}
\sigma^j(t,x)D_{xx}^2u^{\overline{\sigma}}\,\sigma^j(t,x)'\geq0
\end{align*}
for all $j=1,...,d$. Therefore, one can verify that $u^{\overline{\sigma}}$ solves \eqref{pde}. Then the claim follows by Proposition \ref{pricing a bond option}.
\end{proof}

\section{Pricing a Stream of Cashflows}\label{pricing a stream of cashflows}
Due to the nonlinearity of the pricing measure, in general, we cannot price interest rate derivatives by pricing each cashflow separately. As in Section \ref{risk-neutral valuation}, we consider a contract consisting of a stream of cashflows, which we denote by $X$. Then the discounted payoff is given by
\begin{align*}
\tilde{X}=\sum_{i=0}^NM_{T_i}^{-1}\xi_i
\end{align*}
for a tenor structure $0<T_0<T_1<...<T_N=\bar{T}$ and $\xi_i\in L_G^p(\Omega_{T_i})$ with $p>1$ for all $i$. In order to price the contract, we are interested in $\hat{\mathbb{E}}[\tilde{X}]$ and $-\hat{\mathbb{E}}[-\tilde{X}]$. When there is no volatility uncertainty, we can simply price the contract by pricing each cashflow individually, since the pricing measure is linear in that case. However, in the presence of volatility uncertainty, the pricing measure is sublinear, which implies
\begin{align*}
\hat{\mathbb{E}}[\tilde{X}]\leq\sum_{i=0}^N\hat{\mathbb{E}}[M_{T_i}^{-1}\xi_i],\quad-\hat{\mathbb{E}}[-\tilde{X}]\geq\sum_{i=0}^N-\hat{\mathbb{E}}[-M_{T_i}^{-1}\xi_i].
\end{align*}
Therefore, if we price each cashflow separately, we possibly only obtain an upper, respectively lower, bound for the upper, respectively lower, bound of the price---which does not yield much information about the price of the contract.
\par If the contract has symmetric cashflows, then it has a single price and we can determine the price by pricing each of its cashflows individually. For contracts with symmetric cashflows, the upper expectation coincides with the lower expectation of the discounted payoff, and we obtain both by computing the forward sublinear expectation of each cashflow separately.
\begin{lem}\label{pricing contracts with symmetric cashflows}
If $\xi_i$, for all $i$, satisfies $\hat{\mathbb{E}}_t^{T_i}[\xi_i]=-\hat{\mathbb{E}}_t^{T_i}[-\xi_i]$ for $t\leq T_0$, then it holds
\begin{align*}
M_t\hat{\mathbb{E}}_t[\tilde{X}]=\sum_{i=0}^NP_t(T_i)\hat{\mathbb{E}}_t^{T_i}[\xi_i]=-M_t\hat{\mathbb{E}}_t[-\tilde{X}].
\end{align*}
\end{lem}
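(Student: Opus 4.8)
The plan is to reduce the statement to two facts: first, that each discounted cashflow $M_{T_i}^{-1}\xi_i$ is itself symmetric under the conditional $G$-expectation $\hat{\mathbb{E}}_t$; and second, that a sublinear expectation behaves \emph{additively} whenever one of the summands is symmetric. Granting these, the discounted payoff $\tilde{X}=\sum_{i=0}^N M_{T_i}^{-1}\xi_i$ splits termwise under $\hat{\mathbb{E}}_t$, and each resulting term is converted to the forward sublinear expectation by Proposition \ref{preliminary results} $(i)$.

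First I would transfer the symmetry hypothesis from $\xi_i$ to the discounted cashflow. Applying Proposition \ref{preliminary results} $(i)$ to $\xi_i$ and then to $-\xi_i$ gives
\begin{align*}
M_t\hat{\mathbb{E}}_t[M_{T_i}^{-1}\xi_i]=P_t(T_i)\hat{\mathbb{E}}_t^{T_i}[\xi_i],\qquad M_t\hat{\mathbb{E}}_t[-M_{T_i}^{-1}\xi_i]=P_t(T_i)\hat{\mathbb{E}}_t^{T_i}[-\xi_i].
\end{align*}
The assumption $\hat{\mathbb{E}}_t^{T_i}[\xi_i]=-\hat{\mathbb{E}}_t^{T_i}[-\xi_i]$ forces $M_t\hat{\mathbb{E}}_t[-M_{T_i}^{-1}\xi_i]=-M_t\hat{\mathbb{E}}_t[M_{T_i}^{-1}\xi_i]$, and dividing by the strictly positive, $\mathcal{F}_t$-measurable factor $M_t$ yields $\hat{\mathbb{E}}_t[-M_{T_i}^{-1}\xi_i]=-\hat{\mathbb{E}}_t[M_{T_i}^{-1}\xi_i]$; that is, each $M_{T_i}^{-1}\xi_i$ is symmetric under $\hat{\mathbb{E}}_t$.

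The crux is the additivity step. For any $\zeta$ and any symmetric $\eta$ (meaning $\hat{\mathbb{E}}_t[\eta]=-\hat{\mathbb{E}}_t[-\eta]$), subadditivity of $\hat{\mathbb{E}}_t$ gives $\hat{\mathbb{E}}_t[\zeta+\eta]\leq\hat{\mathbb{E}}_t[\zeta]+\hat{\mathbb{E}}_t[\eta]$, while the estimate
\begin{align*}
\hat{\mathbb{E}}_t[\zeta]=\hat{\mathbb{E}}_t[(\zeta+\eta)+(-\eta)]\leq\hat{\mathbb{E}}_t[\zeta+\eta]+\hat{\mathbb{E}}_t[-\eta]=\hat{\mathbb{E}}_t[\zeta+\eta]-\hat{\mathbb{E}}_t[\eta]
\end{align*}
supplies the reverse inequality, whence $\hat{\mathbb{E}}_t[\zeta+\eta]=\hat{\mathbb{E}}_t[\zeta]+\hat{\mathbb{E}}_t[\eta]$. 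Peeling off the symmetric summands $M_{T_i}^{-1}\xi_i$ one at a time then gives $\hat{\mathbb{E}}_t[\tilde{X}]=\sum_{i=0}^N\hat{\mathbb{E}}_t[M_{T_i}^{-1}\xi_i]$; multiplying by $M_t$ and invoking Proposition \ref{preliminary results} $(i)$ termwise produces $M_t\hat{\mathbb{E}}_t[\tilde{X}]=\sum_{i=0}^N P_t(T_i)\hat{\mathbb{E}}_t^{T_i}[\xi_i]$. Since symmetry of $\eta$ is equivalent to symmetry of $-\eta$, the identical argument applied to $-\tilde{X}$ yields $-M_t\hat{\mathbb{E}}_t[-\tilde{X}]=\sum_{i=0}^N P_t(T_i)\hat{\mathbb{E}}_t^{T_i}[\xi_i]$, closing the chain of equalities.

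I expect the only genuinely substantive point to be the additivity lemma above, which is precisely where sublinearity collapses to linearity along the symmetric directions; everything else is bookkeeping with Proposition \ref{preliminary results} $(i)$ and the positivity of $M_t$. The one thing to verify carefully is the integrability needed to separate the $\mathcal{F}_t$-measurable factor $M_t$ and to apply $\hat{\mathbb{E}}_t$ to each summand, but the standing assumptions $M_{T_i}^{-1}\xi_i\in L_G^2(\Omega_{T_i})$ from Section \ref{risk-neutral valuation} and $\xi_i\in L_G^p(\Omega_{T_i})$ with $p>1$ should supply exactly this.
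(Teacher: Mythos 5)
Your proof is correct, but it is organized differently from the paper's. The paper never isolates an additivity lemma: it simply applies subadditivity of $\hat{\mathbb{E}}_t$ to $\tilde{X}$ and to $-\tilde{X}$ to get
\begin{align*}
M_t\hat{\mathbb{E}}_t[\tilde{X}]\leq\sum_{i=0}^NP_t(T_i)\hat{\mathbb{E}}_t^{T_i}[\xi_i],\qquad -M_t\hat{\mathbb{E}}_t[-\tilde{X}]\geq\sum_{i=0}^N-P_t(T_i)\hat{\mathbb{E}}_t^{T_i}[-\xi_i],
\end{align*}
and then observes that the symmetry hypothesis makes the two bounds equal, so that the elementary inequality $\hat{\mathbb{E}}_t[\tilde{X}]\geq-\hat{\mathbb{E}}_t[-\tilde{X}]$ squeezes everything into a chain of equalities. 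You instead first transfer symmetry to each discounted cashflow $M_{T_i}^{-1}\xi_i$ via Proposition \ref{preliminary results} $(i)$, prove that a sublinear expectation is additive whenever one summand is symmetric, and peel off the summands one at a time. Your additivity lemma and its proof are standard and sound (both directions of the inequality come from subadditivity, the second via $\zeta=(\zeta+\eta)+(-\eta)$), and the division by the strictly positive $\mathcal{F}_t$-measurable $M_t$ is harmless. What your route buys is a stronger intermediate conclusion --- genuine termwise additivity of $\hat{\mathbb{E}}_t$ on $\tilde{X}$, a reusable fact about symmetric random variables --- at the cost of a few extra steps; the paper's squeeze is more economical because it never needs the lower bound $\hat{\mathbb{E}}_t[\zeta]+\hat{\mathbb{E}}_t[\eta]\leq\hat{\mathbb{E}}_t[\zeta+\eta]$ explicitly, only the trivial ordering of upper and lower expectations. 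Both arguments ultimately exploit the same mechanism: sublinearity collapses to linearity along symmetric directions.
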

\begin{proof}
We derive an upper, respectively lower, bound for the upper, respectively lower, expectation of $\tilde{X}$ and show that they coincide. If we use the sublinearity of $\hat{\mathbb{E}}$ and Proposition \ref{preliminary results} $(i)$, for $t\leq T_0$, we get
\begin{align*}
M_t\hat{\mathbb{E}}_t[\tilde{X}]\leq\sum_{i=0}^NP_t(T_i)\hat{\mathbb{E}}_t^{T_i}[\xi_i],\quad-M_t\hat{\mathbb{E}}_t[-\tilde{X}]\geq\sum_{i=0}^N-P_t(T_i)\hat{\mathbb{E}}_t^{T_i}[-\xi_i].
\end{align*}
Moreover, for $t\leq T_0$, it holds $\hat{\mathbb{E}}_t[\tilde{X}]\geq-\hat{\mathbb{E}}_t[-\tilde{X}]$ and $\hat{\mathbb{E}}_t^{T_i}[\xi_i]=-\hat{\mathbb{E}}_t^{T_i}[-\xi_i]$ for all $i$, which yields the assertion.
\end{proof}
\par In general (so in particular, for contracts with asymmetric cashflows), we can use a backward induction procedure to price the contract. Then we obtain the upper and the lower expectation of the discounted payoff by recursively evaluating the forward sublinear expectation of the cashflows starting from the last cashflow.
\begin{lem}\label{pricing contracts with asymmetric cashflows}
It holds $\hat{\mathbb{E}}[\tilde{X}]=\tilde{Y}_0^+$ and $-\hat{\mathbb{E}}[-\tilde{X}]=-\tilde{Y}_0^-$, where $\tilde{Y}_i^\pm$ is defined by
\begin{align}\label{definition of Ytilde}
\tilde{Y}_i^\pm:=P_{T_{i-1}}(T_i)\hat{\mathbb{E}}_{T_{i-1}}^{T_i}[\pm\xi_i+\tilde{Y}_{i+1}^\pm]
\end{align}
for all $i=0,1,...,N$ and $T_{-1}:=0$ and $\tilde{Y}_{N+1}^\pm:=0$.
\end{lem}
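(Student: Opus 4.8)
The plan is to prove the two assertions by a single backward induction and to treat the lower bound as a corollary of the upper one. Indeed, replacing each $\xi_i$ by $-\xi_i$ turns the recursion \eqref{definition of Ytilde} for $\tilde{Y}_i^+$ into the one for $\tilde{Y}_i^-$, so it suffices to prove $\hat{\mathbb{E}}[\tilde{X}]=\tilde{Y}_0^+$ for an arbitrary family of cashflows; applying this to the contract with payoffs $-\xi_i$ then gives $\tilde{Y}_0^-=\hat{\mathbb{E}}[-\tilde{X}]$ and hence $-\hat{\mathbb{E}}[-\tilde{X}]=-\tilde{Y}_0^-$. The heart of the argument is the identity
\[
\tilde{Y}_k^+ = M_{T_{k-1}}\,\hat{\mathbb{E}}_{T_{k-1}}\Big[\sum_{i=k}^N M_{T_i}^{-1}\xi_i\Big],\qquad k=0,1,\ldots,N+1,
\]
which I would establish by backward induction on $k$, reading the right-hand side for $k=N+1$ as the empty sum $0$ (matching $\tilde{Y}_{N+1}^+:=0$), which settles the base case.

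For the inductive step I would start from the definition $\tilde{Y}_k^+=P_{T_{k-1}}(T_k)\hat{\mathbb{E}}_{T_{k-1}}^{T_k}[\xi_k+\tilde{Y}_{k+1}^+]$ and apply Proposition \ref{preliminary results}(i) with $t=T_{k-1}$ and $T=T_k$ to rewrite it as $M_{T_{k-1}}\hat{\mathbb{E}}_{T_{k-1}}[M_{T_k}^{-1}(\xi_k+\tilde{Y}_{k+1}^+)]$. Substituting the induction hypothesis for $\tilde{Y}_{k+1}^+$, the factor $M_{T_k}^{-1}M_{T_k}$ cancels and leaves $M_{T_k}^{-1}\xi_k+\hat{\mathbb{E}}_{T_k}[\sum_{i=k+1}^N M_{T_i}^{-1}\xi_i]$ inside the outer expectation. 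Since $M_{T_k}^{-1}\xi_k$ is $\mathcal{F}_{T_k}$-measurable, the "taking out what is known" property of the conditional $G$-expectation lets me pull it inside, giving $\hat{\mathbb{E}}_{T_k}[\sum_{i=k}^N M_{T_i}^{-1}\xi_i]$, and the tower property $\hat{\mathbb{E}}_{T_{k-1}}[\hat{\mathbb{E}}_{T_k}[\,\cdot\,]]=\hat{\mathbb{E}}_{T_{k-1}}[\,\cdot\,]$ (valid because $T_{k-1}\le T_k$) then collapses the nested expectation into the claimed expression for $\tilde{Y}_k^+$. Evaluating the resulting identity at $k=0$ and using $T_{-1}:=0$, $M_0=1$, and $\hat{\mathbb{E}}_0=\hat{\mathbb{E}}$ yields $\tilde{Y}_0^+=\hat{\mathbb{E}}[\sum_{i=0}^N M_{T_i}^{-1}\xi_i]=\hat{\mathbb{E}}[\tilde{X}]$, which is the first assertion.

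The main obstacle I anticipate is the measurability and integrability bookkeeping needed to license each of the three steps, rather than any deep idea. Concretely, before applying Proposition \ref{preliminary results}(i) at stage $k$ one must know that $\xi_k+\tilde{Y}_{k+1}^+\in L_G^p(\Omega_{T_k})$ for some $p>1$; this in turn requires that the aggregated future value $\tilde{Y}_{k+1}^+=M_{T_k}\hat{\mathbb{E}}_{T_k}[\sum_{i=k+1}^N M_{T_i}^{-1}\xi_i]$ is $\mathcal{F}_{T_k}$-measurable and of the right integrability, which I would obtain from the standing assumption $\xi_i\in L_G^p(\Omega_{T_i})$ together with the good integrability of $M_{T_k}$ in the present Gaussian forward-rate model (via a H\"older estimate of the type used in the proof of Proposition \ref{preliminary results}(iv)). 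Once these random variables are confirmed to live in the appropriate spaces, the properties invoked—Proposition \ref{preliminary results}(i), taking out what is known, and the tower property—are all standard for the conditional $G$-expectation, and the induction closes routinely.
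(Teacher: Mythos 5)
Your proof is correct and takes essentially the same route as the paper's: both arguments peel off the cashflows one at a time using Proposition \ref{preliminary results}(i) together with the time consistency (tower property) and translation invariance of the conditional $G$-expectation, and your explicit backward-induction identity $\tilde{Y}_k^\pm=M_{T_{k-1}}\hat{\mathbb{E}}_{T_{k-1}}[\pm\sum_{i=k}^NM_{T_i}^{-1}\xi_i]$ is just a cleaner packaging of the paper's ``repeat the step from above'' recursion. The reduction of the lower bound to the upper one via $\xi_i\mapsto-\xi_i$ and the integrability bookkeeping you flag are sensible additions but do not change the substance.
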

\begin{proof}
We show the assertion by repeatedly excluding the cashflows from $\tilde{X}$ and using the time consistency of the $G$-expectation. First, we exclude the last cashflow from the sum and write it in terms of $\tilde{Y}_N^+$. Due to the time consistency of $\hat{\mathbb{E}}$, we have
\begin{align*}
\hat{\mathbb{E}}[\pm\tilde{X}]=\hat{\mathbb{E}}\biggl[\pm\sum_{i=0}^{N-1}M_{T_i}^{-1}\xi_i+\hat{\mathbb{E}}_{T_{N-1}}[\pm M_{T_N}^{-1}\xi_N]\biggr].
\end{align*}
By Proposition \ref{preliminary results} $(i)$, we obtain
\begin{align*}
\hat{\mathbb{E}}_{T_{N-1}}[\pm M_{T_N}^{-1}\xi_N]=M_{T_{N-1}}^{-1}P_{T_{N-1}}(T_N)\hat{\mathbb{E}}_{T_{N-1}}^{T_N}[\pm\xi_N]=M_{T_{N-1}}^{-1}\tilde{Y}_N^\pm.
\end{align*}
Second, we exclude the second last cashflow from the sum and repeat the calculation from above. Using the time consistency of $\hat{\mathbb{E}}$, we get
\begin{align*}
\hat{\mathbb{E}}[\pm\tilde{X}]=\hat{\mathbb{E}}\biggl[\pm\sum_{i=0}^{N-2}M_{T_i}^{-1}\xi_i+\hat{\mathbb{E}}_{T_{N-2}}[M_{T_{N-1}}^{-1}(\pm\xi_{N-1}+\tilde{Y}_N^\pm)]\biggr].
\end{align*}
Due to Proposition \ref{preliminary results} $(i)$, we have
\begin{align*}
\hat{\mathbb{E}}_{T_{N-2}}[M_{T_{N-1}}^{-1}(\pm\xi_{N-1}+\tilde{Y}_N^\pm)]=M_{T_{N-2}}^{-1}P_{T_{N-2}}(T_{N-1})\hat{\mathbb{E}}_{T_{N-2}}^{T_{N-1}}[\pm\xi_{N-1}+\tilde{Y}_N^\pm]=M_{T_{N-2}}^{-1}\tilde{Y}_{N-1}^\pm.
\end{align*}
Then we repeat the step from above to eventually obtain $\hat{\mathbb{E}}[\pm\tilde{X}]=\tilde{Y}_0^\pm$.
\end{proof}
\par Next, we consider a stream of options on forward prices. Most nonlinear contracts in fixed income markets can be written as a stream of options on forward prices. However, such contracts are not directly of this form but can be written to be of such a form (see Subsections \ref{caps and floors} and \ref{in-arrears contracts}). Hence, instead of specifying the payoffs in \eqref{definition of Ytilde}, we consider a slightly different sequence of random variables: for $m,n\in\mathbb{N}$ such that $m\neq n$, let $\bar{Y}_i$ be defined by
\begin{align}\label{definition of Ybar}
\bar{Y}_i:=X_{t_i}^{t_{i-1+n},t_{i+n}}\hat{\mathbb{E}}_{t_i}^{t_{i+n}}[\varphi_i(X_{t_{i+1}}^{t_{i+n},t_{i+m}})+\bar{Y}_{i+1}]
\end{align}
for all $i=1,...,N$, where $\varphi_i:\mathbb{R}\rightarrow\mathbb{R}$ and $0=t_1<...<t_{N+(m\vee n)}\leq\bar{T}$, and $\bar{Y}_{N+1}:=0$.
\par The price of such a contract is determined by a system of nonlinear PDEs. We can show that the backward induction procedure to find the price reduces to recursively solving nonlinear PDEs.
\begin{prp}\label{pricing a stream of bond options}
Let $\bar{Y}_i$ be given by \eqref{definition of Ybar} for $i=1,...,N+1$. If $\varphi_i$ satisfies \eqref{estimate} for all $i=1,...,N$, then
\begin{align*}
\bar{Y}_1=X_0^{t_n,t_{n+1}}u_1\big(0,X_0^{t_{1+n},t_{1+m}},(X_0^{t_{k-1+n},t_{k+n}},X_0^{t_{k+n},t_{k+m}})_{k=2}^N\big),
\end{align*}
where $u_i:[0,t_{i+1}]\times\mathbb{R}^{2(N-i)+1}\rightarrow\mathbb{R}$ is the unique viscosity solution to the nonlinear PDE
\begin{align}\label{recursive pde}
\begin{split}
\partial_tu+\tfrac{1}{2}\sum_{j=1}^d\Big(\overline{\sigma}_j^2\big(H_i^j(t,x_i,D_{x_i}u,D_{x_ix_i}^2u)\big)^+-\underline{\sigma}_j^2\big(H_i^j(t,x_i,D_{x_i}u,D_{x_ix_i}^2u)\big)^-\Big)={}&0,
\\u(t_{i+1},x_i)={}&f_i(x_i)
\end{split}
\end{align}
for $i=1,...,N$, where $x_i:=(\hat{x}_i,(\tilde{x}_k,\hat{x}_k)_{k=i+1}^N)$ for $i=1,...,N-1$ and $x_N:=\hat{x}_N$ and
\begin{align*}
H_i^j(t,x_i,D_{x_i}u,D_{x_ix_i}^2u):={}&\sigma_i^j(t,x_i)'D_{x_ix_i}^2u\,\sigma_i^j(t,x_i)+2D_{x_i}u\,\mu_i^j(t,x_i),
\\\sigma_i^j(t,x_i):={}&\diag(x_i)\Big(\sigma_t^j(t_{i+n},t_{i+m}),\big(\sigma_t^j(t_{k-1+n},t_{k+n}),\sigma_t^j(t_{k+n},t_{k+m})\big)_{k=i+1}^N\Big)',
\\\mu_i^j(t,x_i):={}&\diag\big(\sigma_i^j(t,x_i)\big)\Big(0,\big(\sigma_t^j(t_{k-1+n},t_{i+n}),\sigma_t^j(t_{k+n},t_{i+n})\big)_{k=i+1}^N\Big)',
\\f_i(x_i):={}&\varphi_i(\hat{x}_i)+\tilde{x}_{i+1}u_{i+1}(t_{i+1},x_{i+1})
\end{align*}
for $i=1,...,N-1$ and
\begin{align*}
H_N^j(t,x_N,D_{x_N}u,D_{x_Nx_N}^2u):={}&\sigma_t^j(t_{N+n},t_{N+m})^2x_N^2\partial_{x_Nx_N}^2u,
\\f_N(x_N):={}&\varphi_N(x_N).
\end{align*}
\end{prp}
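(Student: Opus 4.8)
The plan is to prove the claim by backward induction on $i$, showing that for each $i=1,\dots,N$ the random variable $\bar{Y}_i$ factorizes as
\begin{align*}
\bar{Y}_i = X_{t_i}^{t_{i-1+n},t_{i+n}}\, u_i\big(t_i, X_{t_i}^{t_{i+n},t_{i+m}}, (X_{t_i}^{t_{k-1+n},t_{k+n}}, X_{t_i}^{t_{k+n},t_{k+m}})_{k=i+1}^N\big),
\end{align*}
where $u_i$ is the viscosity solution to \eqref{recursive pde}; evaluating this at $i=1$ with $t_1=0$ then gives the assertion. The leading factor $X^{t_{i-1+n},t_{i+n}}$ is pulled out of $\bar{Y}_i$ and reappears in the terminal datum $f_{i-1}$ one step later, which is exactly the mechanism encoded in the definition of $f_i$ through the term $\tilde{x}_{i+1}u_{i+1}(t_{i+1},x_{i+1})$.

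The core of each step is a nonlinear Feynman--Kac representation via Theorems 4.4 and 4.5 of \citet*{hujipengsong2014}. To set it up I would first determine the joint dynamics, under the common forward sublinear expectation $\hat{\mathbb{E}}^{t_{i+n}}$, of the vector of forward prices entering $x_i$. Starting from the dynamics of $X^{T,\tilde{T}}$ under $\hat{\mathbb{E}}$ given in Proposition \ref{preliminary results} $(iv)$ and applying the Girsanov transformation of \citet*{hujipengsong2014} (i.e.\ replacing $dB^j$ by $dB^{j,t_{i+n}} - b^j(t_{i+n})\,d\langle B^j\rangle$), one finds that each coordinate acquires a drift coefficient $\sigma^j(T,\tilde{T})\,X^{T,\tilde{T}}\,\sigma^j(T,t_{i+n})$ with respect to $d\langle B^j\rangle$, since $b^j(t_{i+n})-b^j(T)=\sigma^j(T,t_{i+n})$. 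This reproduces exactly the diffusion vector $\sigma_i^j$ and drift vector $\mu_i^j$ of the statement; in particular the coordinates $X^{t_{i+n},t_{i+m}}$ and $X^{t_{i+n},t_{i+1+n}}$ stay driftless because $\sigma^j(t_{i+n},t_{i+n})=0$, which accounts for the leading zeros in $\mu_i^j$. Feeding this forward SDE into the Feynman--Kac formula yields $\hat{\mathbb{E}}_{t_i}^{t_{i+n}}[f_i(\cdots)]=u_i(t_i,\cdots)$ with $u_i$ solving \eqref{recursive pde}: the second-order part produces $\sigma_i^j{}'D_{x_ix_i}^2u\,\sigma_i^j$, the drift produces $2D_{x_i}u\,\mu_i^j$, and the sublinearity of $\hat{\mathbb{E}}^{t_{i+n}}$ turns the combined quantity $H_i^j$ into the $G$-type operator $\overline{\sigma}_j^2(H_i^j)^+ - \underline{\sigma}_j^2(H_i^j)^-$ of \eqref{recursive pde}.

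For the induction itself, the base case $i=N$ is Proposition \ref{pricing a bond option} applied to $\varphi_N$ on the single driftless forward price $X^{t_{N+n},t_{N+m}}$, giving $\bar{Y}_N = X_{t_N}^{t_{N-1+n},t_{N+n}}\,u_N(t_N,X_{t_N}^{t_{N+n},t_{N+m}})$. For the step from $i+1$ to $i$, I would substitute the induction hypothesis into \eqref{definition of Ybar}, recognize the integrand $\varphi_i(X_{t_{i+1}}^{t_{i+n},t_{i+m}}) + X_{t_{i+1}}^{t_{i+n},t_{i+1+n}}u_{i+1}(t_{i+1},x_{i+1})$ as $f_i$ evaluated at the forward prices at time $t_{i+1}$, and then apply the Feynman--Kac representation above to $\hat{\mathbb{E}}_{t_i}^{t_{i+n}}[f_i(\cdots)]$. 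Throughout, the integrability needed to invoke the $G$-FBSDE theory follows from Proposition \ref{preliminary results} $(iv)$, which places every forward price in $L_G^p(\Omega_t)$ for all $p<\infty$.

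The main obstacle I anticipate is the propagation of the regularity condition \eqref{estimate} through the recursion: to apply the Feynman--Kac formula at level $i$ one needs the terminal datum $f_i$ to satisfy \eqref{estimate}, but $f_i$ contains the product $\tilde{x}_{i+1}u_{i+1}(t_{i+1},x_{i+1})$, so one must first verify that each viscosity solution $u_{i+1}$ is itself locally Lipschitz with polynomial growth of the type \eqref{estimate}. This I would establish as part of the induction, using the a priori estimates for $G$-FBSDEs in \citet*{hujipengsong2014} to transfer the polynomial-Lipschitz bound from $f_i$ to $u_i$, and then checking that multiplying by the linear coordinate $\tilde{x}_{i+1}$ and adding $\varphi_i$ preserves \eqref{estimate}, at the cost of increasing the exponent $m$ at each step, which stays finite since there are finitely many steps. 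The remaining work---matching indices in $\sigma_i^j$ and $\mu_i^j$ and confirming the boundary data $f_i$ and $f_N$---is bookkeeping rather than a genuine difficulty.
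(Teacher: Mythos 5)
Your proposal follows essentially the same route as the paper's proof: backward induction starting from $\bar{Y}_N$ via Proposition \ref{pricing a bond option}, substituting the induction hypothesis into \eqref{definition of Ybar} so that the integrand becomes $f_i$, applying the nonlinear Feynman--Kac formula of Hu--Ji--Peng--Song at each level, and propagating the Lipschitz--polynomial-growth condition \eqref{estimate} to each $f_i$ via the a priori estimates for the viscosity solutions $u_{i+1}$. Your explicit Girsanov computation identifying the drift $\mu_i^j$ (via $b^j(t_{i+n})-b^j(T)=\sigma^j(T,t_{i+n})$) correctly fills in a step the paper leaves implicit, so the argument is sound.
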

\begin{proof}
We apply the nonlinear Feynman-Kac formula of \citet*{hujipengsong2014} to $\bar{Y}_i$ for all $i=1,...,N$ to show that one can recursively solve the nonlinear PDE \eqref{recursive pde} to obtain $\bar{Y}_1$.
\par We start with $\bar{Y}_N$. By Proposition \ref{pricing a bond option}, we know that
\begin{align*}
\bar{Y}_N=X_{t_N}^{t_{N-1+n},t_{N+n}}u_N(t_N,X_{t_N}^{t_{N+n},t_{N+m}}),
\end{align*}
where $u_N(t_N,\cdot)$ satisfies \eqref{estimate} \citep*[Proposition 4.2]{hujipengsong2014}.
\par Now we move on to $\bar{Y}_{N-1}$. Inserting $\bar{Y}_N$ in the definition of $\bar{Y}_{N-1}$, we get
\begin{align*}
\bar{Y}_{N-1}=X_{t_{N-1}}^{t_{N-2+n},t_{N-1+n}}\hat{\mathbb{E}}_{t_{N-1}}^{t_{N-1+n}}[f_{N-1}(X_{t_N}^{t_{N-1+n},t_{N-1+m}},X_{t_N}^{t_{N-1+n},t_{N+n}},X_{t_N}^{t_{N+n},t_{N+m}})].
\end{align*}
One can show that $f_{N-1}$ satisfies \eqref{estimate}, since $\varphi_{N-1}$ and $u_N(t_N,\cdot)$ satisfy \eqref{estimate}. Hence, we can apply the nonlinear Feynman-Kac formula---as in the proof of Proposition \ref{pricing a bond option}---to obtain
\begin{align*}
\bar{Y}_{N-1}=X_{t_{N-1}}^{t_{N-2+n},t_{N-1+n}}u_{N-1}(t_{N-1},X_{t_{N-1}}^{t_{N-1+n},t_{N-1+m}},X_{t_{N-1}}^{t_{N-1+n},t_{N+n}},X_{t_{N-1}}^{t_{N+n},t_{N+m}}),
\end{align*}
where $u_{N-1}(t_{N-1},\cdot)$ satisfies \eqref{estimate} \citep*[Proposition 4.2]{hujipengsong2014}.
\par Next, we perform the following recursive step for all $i=1,...,N-2$ backwards to obtain $\bar{Y}_1$. Let us suppose that
\begin{align*}
\bar{Y}_{i+1}=X_{t_{i+1}}^{t_{i+n},t_{i+1+n}}u_{i+1}\big(t_{i+1},X_{t_{i+1}}^{t_{i+1+n},t_{i+1+m}}(X_{t_{i+1}}^{t_{k-1+n},t_{k+n}},X_{t_{i+1}}^{t_{k+n},t_{k+m}})_{k=i+2}^N\big),
\end{align*}
where $u_{i+1}(t_{i+1},\cdot)$ satisfies \eqref{estimate}. Plugging $\bar{Y}_{i+1}$ into the definition of $\bar{Y}_i$ yields
\begin{align*}
\bar{Y}_i=X_{t_i}^{t_{i-1+n},t_{i+n}}\hat{\mathbb{E}}_{t_i}^{t_{i+n}}\big[f_i\big(X_{t_{i+1}}^{t_{i+n},t_{i+m}},(X_{t_{i+1}}^{t_{k-1+n},t_{k+n}},X_{t_{i+1}}^{t_{k+n},t_{k+m}})_{k=i+1}^N\big)\big].
\end{align*}
As in the previous step, one can show that $f_i$ satisfies \eqref{estimate}. Therefore, the nonlinear Feynman-Kac formula implies
\begin{align*}
\bar{Y}_i=X_{t_i}^{t_{i-1+n},t_{i+n}}u_i\big(t_i,X_{t_i}^{t_{i+n},t_{i+m}},(X_{t_i}^{t_{k-1+n},t_{k+n}},X_{t_i}^{t_{k+n},t_{k+m}})_{k=i+1}^N\big),
\end{align*}
where $u_i(t_i,\cdot)$ satisfies \eqref{estimate} \citep*[Proposition 4.2]{hujipengsong2014}.
\end{proof}
\par When the contract consists of options that are additionally convex or concave, the price is determined by the price from the corresponding HJM model without volatility uncertainty. If all payoff functions are convex, respectively concave, then the backward induction procedure reduces to computing the linear expectation of all cashflows when the forward price dynamics are driven by a standard Brownian motion with volatility $\overline{\sigma}$, respectively $\underline{\sigma}$.
\begin{prp}\label{pricing a stream of convex bond options}
Let $\bar{Y}_i$ be given by \eqref{definition of Ybar} for $i=1,...,N+1$. If $\varphi_i$ is convex and satisfies \eqref{estimate} for all $i=1,...,N$, then
\begin{align*}
\bar{Y}_1=\sum_{i=1}^NX_0^{t_n,t_{i+n}}u_i^{\overline{\sigma}}(0,X_0^{t_{i+n},t_{i+m}}),
\end{align*}
where the function $u_i^\sigma:[0,t_{i+1}]\times\mathbb{R}\rightarrow\mathbb{R}$, for all $i=1,...,N$ and $\sigma\in\Sigma$, is defined by
\begin{align*}
u_i^\sigma(t,\hat{x}_i):=\mathbb{E}_{P_0}[\varphi_i(X_{t_{i+1}}^i)]
\end{align*}
and the process $X^i=(X_s^i)_{t\leq s\leq t_{i+1}}$ is given by
\begin{align*}
X_s^i=\hat{x}_i-\sum_{j=1}^d\int_t^s\sigma_u^j(t_{i+n},t_{i+m})X_u^i\sigma_jdB_u^j.
\end{align*}
If $\varphi_i$ is concave instead of convex for all $i=1,...,N$, then
\begin{align*}
\bar{Y}_1=\sum_{i=1}^NX_0^{t_n,t_{i+n}}u_i^{\underline{\sigma}}(0,X_0^{t_{i+n},t_{i+m}}).
\end{align*}
\end{prp}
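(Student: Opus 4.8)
The plan is to proceed exactly as Proposition \ref{pricing a convex bond option} relates to Proposition \ref{pricing a bond option}: I would exhibit an explicit candidate for each value function $u_i$ appearing in Proposition \ref{pricing a stream of bond options}, built out of the linear expectations $u_k^{\overline{\sigma}}$, verify that it solves the nonlinear system \eqref{recursive pde}, and conclude by the uniqueness of the viscosity solution asserted there. Guided by the multiplicativity of forward prices, $X_t^{a,b}X_t^{b,c}=X_t^{a,c}$, so that the product $\prod_{l=i+1}^k\tilde x_l$ stands for $X^{t_{i+n},t_{k+n}}$, the natural candidate is
\begin{align*}
u_i(t,x_i)=u_i^{\overline{\sigma}}(t,\hat x_i)+\sum_{k=i+1}^N\Big(\prod_{l=i+1}^k\tilde x_l\Big)u_k^{\overline{\sigma}}(t,\hat x_k).
\end{align*}
Evaluating this candidate at $i=1$, $t=0$ and using $X_0^{t_n,t_{1+n}}\prod_{l=2}^k\tilde x_l=X_0^{t_n,t_{k+n}}$ reproduces the claimed formula, so everything reduces to checking that the display above solves \eqref{recursive pde}.

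Two preliminary facts I would record first. Since $\varphi_k$ is convex and, for fixed starting point, $X^k$ is a driftless linear functional $X_{t_{k+1}}^k=\hat x_k\,\mathcal{E}$ with $\mathcal{E}$ a stochastic exponential independent of $\hat x_k$, the map $\hat x_k\mapsto u_k^{\overline{\sigma}}(t,\hat x_k)=\mathbb{E}_{P_0}[\varphi_k(\hat x_k\,\mathcal{E})]$ is an average of convex functions and hence convex; in particular $\partial^2_{\hat x_k\hat x_k}u_k^{\overline{\sigma}}\ge0$. Second, the terminal data match: at $t=t_{i+1}$ one has $u_i^{\overline{\sigma}}(t_{i+1},\hat x_i)=\varphi_i(\hat x_i)$, and peeling off the factor $\tilde x_{i+1}$ from the remaining sum identifies the candidate with $f_i(x_i)=\varphi_i(\hat x_i)+\tilde x_{i+1}u_{i+1}(t_{i+1},x_{i+1})$, using the candidate already established at level $i+1$.

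The crux --- and the step I expect to be the main obstacle --- is to show that $H_i^j$ evaluated at the candidate is nonnegative for every $j$, so that the positive part is active, the negative part vanishes, and \eqref{recursive pde} collapses to the linear equations solved by the $u_k^{\overline{\sigma}}$ with volatility $\overline{\sigma}$. Unlike in Proposition \ref{pricing a convex bond option}, the candidate is not itself convex: it is a sum of products of the positive forward-price variables $\tilde x_l$ with the convex functions $u_k^{\overline{\sigma}}$, and the first-order term $2D_{x_i}u\,\mu_i^j$ in $H_i^j$ carries no definite sign on its own. The key is that the drift $\mu_i^j$ is precisely the change-of-numeraire correction built into Proposition \ref{preliminary results} $(iv)$: under the $t_{i+n}$-forward sublinear expectation each prefactor $\prod_{l=i+1}^k\tilde x_l=X^{t_{i+n},t_{k+n}}$ is a symmetric martingale, and so is $X^{t_{i+n},t_{k+n}}X^{t_{k+n},t_{k+m}}=X^{t_{i+n},t_{k+m}}$. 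I would use these two martingale identities to show that, term by term in $k$, the contribution of $\mu_i^j$ cancels both the drift of $\hat x_k=X^{t_{k+n},t_{k+m}}$ (which is not driftless under the $t_{i+n}$-forward measure) and the cross-covariation between the prefactor and $u_k^{\overline{\sigma}}(\cdot,\hat x_k)$, leaving only
\begin{align*}
H_i^j\big(t,x_i,D_{x_i}u,D^2_{x_ix_i}u\big)=\sum_{k=i}^N\Big(\prod_{l=i+1}^k\tilde x_l\Big)\sigma_t^j(t_{k+n},t_{k+m})^2\hat x_k^2\,\partial^2_{\hat x_k\hat x_k}u_k^{\overline{\sigma}}\ge0,
\end{align*}
which is nonnegative by the convexity recorded above. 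Verifying this cancellation is the real computational content: it amounts to re-deriving, inside the generator $H_i^j$, that multiplying the $k$-th summand by the density-like factor $X^{t_{i+n},t_{k+n}}$ transports the pricing of the $k$-th option from its own $t_{k+n}$-forward measure to the $t_{i+n}$-forward measure without introducing drift.

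Once nonnegativity of $H_i^j$ is established, selecting the branch with $\overline{\sigma}_j^2$ turns \eqref{recursive pde} into $\partial_tu+\tfrac{1}{2}\sum_j\overline{\sigma}_j^2 H_i^j=0$, which the candidate satisfies because the prefactors are time-independent and killed by the generator while each $u_k^{\overline{\sigma}}$ solves its linear Feynman--Kac equation; together with the matching terminal data this shows the candidate is a, hence by Proposition \ref{pricing a stream of bond options} the unique, viscosity solution. Telescoping the forward prices at $i=1$, $t=0$ then yields the stated formula. For the concave case I would run the identical argument with $\partial^2_{\hat x_k\hat x_k}u_k^{\underline{\sigma}}\le0$, so that $H_i^j\le0$, the negative part is selected, and the extremal volatility is $\underline{\sigma}$.
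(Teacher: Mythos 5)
Your proposal is correct and follows essentially the same route as the paper: exhibit the explicit candidate $u_i=u_i^{\overline{\sigma}}+\sum_{k>i}(\prod_{l=i+1}^k\tilde x_l)u_k^{\overline{\sigma}}$ (the paper writes it recursively as $u_i=u_i^{\overline{\sigma}}+\tilde x_{i+1}u_{i+1}$, which unrolls to your closed form), verify by backward induction that $H_i^j\geq0$ using convexity of the $u_k^{\overline{\sigma}}$ and positivity of the forward-price variables so that \eqref{recursive pde} collapses to the linear Feynman--Kac equations with volatility $\overline{\sigma}$, and conclude via the uniqueness in Proposition \ref{pricing a stream of bond options}. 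Your claimed outcome of the drift-cancellation computation, $H_i^j=\sum_{k=i}^N(\prod_{l=i+1}^k\tilde x_l)\sigma_t^j(t_{k+n},t_{k+m})^2\hat x_k^2\partial^2_{\hat x_k\hat x_k}u_k^{\overline{\sigma}}$, is exactly what the paper's recursion produces.
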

\begin{proof}
We solve the nonlinear PDE \eqref{recursive pde} for all $i=1,...,N$ and use Proposition \ref{pricing a stream of bond options} to prove the first assertion; the proof of the second assertion is similar. Moreover, we only consider the case in which $\tilde{x}_i\geq0$ for all $i=2,...,N$---this is sufficient as the forward prices are positive.
\par First of all, we can show that
\begin{align*}
u_N(t,x_N)=u_N^{\overline{\sigma}}(t,\hat{x}_N),
\end{align*}
since $u_i^{\overline{\sigma}}(t,\cdot)$ is convex and $u_i^{\overline{\sigma}}$ is the solution to
\begin{align*}
\partial_tu+\tfrac{1}{2}\sum_{j=1}^d\overline{\sigma}_j^2\sigma_t^j(t_{i+n},t_{i+m})^2\hat{x}_i^2\partial_{\hat{x}_i\hat{x}_i}^2u=0,\quad u(t_{i+1},\hat{x}_i)={}&\varphi_i(\hat{x}_i)
\end{align*}
for all $i=1,...,N$.
\par Second, we show by verification that
\begin{align*}
u_{N-1}(t,x_{N-1})=u_{N-1}^{\overline{\sigma}}(t,\hat{x}_{N-1})+\tilde{x}_Nu_N^{\overline{\sigma}}(t,\hat{x}_N).
\end{align*}
Using the previous equation and performing some calculations leads to
\begin{align*}
H_{N-1}^j(t,x_{N-1},D_{x_{N-1}}u_{N-1},D_{x_{N-1}x_{N-1}}^2u_{N-1})={}&\sigma_t^j(t_{N-1+n},t_{N-1+m})^2\hat{x}_{N-1}^2\partial_{\hat{x}_{N-1}\hat{x}_{N-1}}^2u_{N-1}^{\overline{\sigma}}
\\&{}+\tilde{x}_N\sigma_t^j(t_{N+n},t_{N+m})^2\hat{x}_N^2\partial_{\hat{x}_N\hat{x}_N}^2u_N^{\overline{\sigma}}.
\end{align*}
By the arguments from the first step, we then have
\begin{align*}
H_{N-1}^j(t,x_{N-1},D_{x_{N-1}}u_{N-1},D_{x_{N-1}x_{N-1}}^2u_{N-1})\geq0,
\end{align*}
and (therefore) one can verify that $u_{N-1}$ indeed solves \eqref{recursive pde} for $i=N-1$.
\par Next, we do the following recursive step for all $i=1,...,N-2$ backwards to get an expression for $u_1$. Let us suppose that
\begin{align*}
u_{i+1}(t,x_{i+1})=u_{i+1}^{\overline{\sigma}}(t,\hat{x}_{i+1})+\tilde{x}_{i+2}u_{i+2}(t,x_{i+2})
\end{align*}
and that
\begin{align*}
H_{i+1}^j(t,x_{i+1},D_{x_{i+1}}u_{i+1},D_{x_{i+1}x_{i+1}}^2u_{i+1})\geq0.
\end{align*}
Then we show by verification that
\begin{align*}
u_i(t,x_i)=u_i^{\overline{\sigma}}(t,\hat{x}_i)+\tilde{x}_{i+1}u_{i+1}(t,x_{i+1}).
\end{align*}
If we use the above equation and do some calculations, we obtain
\begin{align*}
H_i^j(t,x_i,D_{x_i}u_i,D_{x_ix_i}^2u_i)={}&\sigma_t^j(t_{i+n},t_{i+m})^2\hat{x}_i^2\partial_{\hat{x}_i\hat{x}_i}^2u_i^{\overline{\sigma}}
\\&{}+\tilde{x}_{i+1}H_{i+1}^j(t,x_{i+1},D_{x_{i+1}}u_{i+1},D_{x_{i+1}x_{i+1}}^2u_{i+1}).
\end{align*}
As in the previous step, we then have
\begin{align*}
H_i^j(t,x_i,D_{x_i}u_i,D_{x_ix_i}^2u_i)\geq0,
\end{align*}
and (thus) one can verify that $u_i$ solves \eqref{recursive pde}. If we recursively plug in the explicit solutions, we eventually obtain
\begin{align*}
u_1=u_1^{\overline{\sigma}}(t,\hat{x}_1)+\sum_{k=2}^N\biggl(\prod_{l=2}^k\tilde{x}_l\biggr)u_k^{\overline{\sigma}}(t,\hat{x}_k).
\end{align*}
\par By Proposition \ref{pricing a stream of bond options} and the definition of the forward prices, we finally obtain the desired expression for $\bar{Y}_1$.
\end{proof}

\section{Common Interest Rate Derivatives}\label{common interest rate derivatives}
With the tools from the preceding sections, we price all major contracts traded in fixed income markets. We consider typical linear contracts, such as fixed coupon bonds, floating rate notes, and interest rate swaps, and nonlinear contracts, such as swaptions, caps and floors, and in-arrears contracts. Using the pricing techniques from Sections \ref{pricing single cashflows} and \ref{pricing a stream of cashflows}, we show how to derive robust pricing formulas for such contracts. That means, we consider a contract with discounted payoff
\begin{align*}
\tilde{X}=\sum_{i=0}^NM_{T_i}^{-1}\xi_i
\end{align*}
for $0<T_0<T_1<...<T_N=\bar{T}$ and specifically given cashflows, and then we show how to find $\hat{\mathbb{E}}[\tilde{X}]$ and $-\hat{\mathbb{E}}[-\tilde{X}]$ or $M_t\hat{\mathbb{E}}_t[\tilde{X}]$ and $-M_t\hat{\mathbb{E}}_t[-\tilde{X}]$ for $t\leq T_0$ if the contract has a symmetric payoff.

\subsection{Fixed Coupon Bonds}\label{fixed coupon bonds}
We can price fixed coupon bonds as in the classical case without volatility uncertainty. A fixed coupon bond is a contract that pays a fixed rate of interest, given by $K>0$, on a nominal value, which is normalized to $1$, at each payment date and the nominal value at the last payment date. Hence, the cashflows are given by
\begin{align}\label{fixed coupon bond cashflows}
\xi_i=1_{\{N\}}(i)+1_{\{1,...,N\}}(i)(T_i-T_{i-1})K
\end{align}
for all $i=0,1,...,N$. Due to its simple payoff structure, the contract has a symmetric payoff, and its price is given by the same expression as the one obtained in traditional term structure models.
\begin{prp}\label{fixed coupon bond price}
Let $\xi_i$ be given by \eqref{fixed coupon bond cashflows} for all $i=0,1,...,N$. Then for $t\leq T_0$,
\begin{align*}
M_t\hat{\mathbb{E}}_t[\tilde{X}]=P_t(T_N)+\sum_{i=1}^NP_t(T_i)(T_i-T_{i-1})K=-M_t\hat{\mathbb{E}}_t[-\tilde{X}].
\end{align*}
\end{prp}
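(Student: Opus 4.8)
The plan is to recognize that every cashflow of the fixed coupon bond is a deterministic constant, so that the contract falls directly into the scope of Lemma \ref{pricing contracts with symmetric cashflows}. Evaluating the indicators in \eqref{fixed coupon bond cashflows} gives $\xi_0=0$, $\xi_i=(T_i-T_{i-1})K$ for $i=1,\dots,N-1$, and $\xi_N=1+(T_N-T_{N-1})K$; in particular each $\xi_i$ is a nonrandom real number and hence trivially lies in $L_G^p(\Omega_{T_i})$ for every $p$.

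First I would verify the symmetry hypothesis of Lemma \ref{pricing contracts with symmetric cashflows}. Since the forward sublinear expectation preserves constants, $\hat{\mathbb{E}}_t^{T_i}[\xi_i]=\xi_i$ for any constant $\xi_i$; this is immediate from Definition \ref{forward sublinear expectation}, as the constant process $Y\equiv\xi_i$ together with $Z\equiv0$ and $K\equiv0$ solves the defining $G$-BSDE. The same reasoning gives $-\hat{\mathbb{E}}_t^{T_i}[-\xi_i]=-(-\xi_i)=\xi_i$, so that $\hat{\mathbb{E}}_t^{T_i}[\xi_i]=-\hat{\mathbb{E}}_t^{T_i}[-\xi_i]$ holds for all $i$ and all $t\leq T_0$. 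Thus the assumption of Lemma \ref{pricing contracts with symmetric cashflows} is satisfied.

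Applying Lemma \ref{pricing contracts with symmetric cashflows} then yields
\begin{align*}
M_t\hat{\mathbb{E}}_t[\tilde{X}]=\sum_{i=0}^NP_t(T_i)\hat{\mathbb{E}}_t^{T_i}[\xi_i]=-M_t\hat{\mathbb{E}}_t[-\tilde{X}],
\end{align*}
and substituting $\hat{\mathbb{E}}_t^{T_i}[\xi_i]=\xi_i$ together with the explicit values of the cashflows collapses the sum: the term $i=0$ vanishes, the nominal value contributes $P_t(T_N)$ through $\xi_N$, and the coupons contribute $\sum_{i=1}^NP_t(T_i)(T_i-T_{i-1})K$. This produces exactly the claimed formula.

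I do not expect any genuine obstacle here: the entire content is the observation that deterministic cashflows are symmetric under the forward sublinear expectation, after which the result is a direct corollary of Lemma \ref{pricing contracts with symmetric cashflows} and the bookkeeping in \eqref{fixed coupon bond cashflows}. The only point requiring a one-line argument is the constant-preserving property of $\hat{\mathbb{E}}_t^{T_i}$, which follows either from the normalization built into any sublinear expectation or, more concretely, by exhibiting the trivial solution of the $G$-BSDE in Definition \ref{forward sublinear expectation}.
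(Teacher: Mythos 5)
Your proof is correct and follows exactly the paper's argument: the paper's own proof is the one-line observation that the cashflows are constants and the result follows from Lemma \ref{pricing contracts with symmetric cashflows}. You have merely spelled out the details (the constant-preserving property of $\hat{\mathbb{E}}_t^{T_i}$ via the trivial $G$-BSDE solution and the bookkeeping of the indicators), all of which is sound.
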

\begin{proof}
Since the cashflows are constants, the assertion follows by Lemma \ref{pricing contracts with symmetric cashflows}.
\end{proof}

\subsection{Floating Rate Notes}\label{floating rate notes}
We can also price floating rate notes as in the classical case without volatility uncertainty. A floating rate note is a fixed coupon bond in which the fixed rate is replaced by a floating rate: the simply compounded spot rate; for $t\leq T\leq\bar{T}$, the simply compounded spot rate with maturity $T$ at time $t$ is defined by
\begin{align*}
L_t(T):=\tfrac{1}{T-t}(\tfrac{1}{P_t(T)}-1).
\end{align*}
The cashflows are then given by
\begin{align}\label{floating rate note cashflows}
\xi_i=1_{\{N\}}(i)+1_{\{1,...,N\}}(i)(T_i-T_{i-1})L_{T_{i-1}}(T_i)
\end{align}
for all $i=0,1,...,N$. Although the cashflows are not constant, the contract yet has a symmetric payoff. As in the classical case, the price is simply given by the price of a zero-coupon bond with maturity $T_0$.
\begin{prp}\label{floating rate note price}
Let $\xi_i$ be given by \eqref{floating rate note cashflows} for all $i=0,1,...,N$. Then for $t\leq T_0$,
\begin{align*}
M_t\hat{\mathbb{E}}_t[\tilde{X}]=P_t(T_0)=-M_t\hat{\mathbb{E}}_t[-\tilde{X}].
\end{align*}
\end{prp}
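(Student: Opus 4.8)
The plan is to reduce the floating coupon to a ratio of bond prices, recognize that ratio as a forward price which is a symmetric $G$-martingale under the appropriate forward sublinear expectation, and then invoke Lemma \ref{pricing contracts with symmetric cashflows}. The first step is purely algebraic: from the definition of $L_t(T)$ one gets $(T_i-T_{i-1})L_{T_{i-1}}(T_i)=\frac{1}{P_{T_{i-1}}(T_i)}-1$, so the cashflows simplify to $\xi_0=0$, to $\xi_i=\frac{1}{P_{T_{i-1}}(T_i)}-1$ for $i=1,\dots,N-1$, and to $\xi_N=\frac{1}{P_{T_{N-1}}(T_N)}$. Since $P_{T_{i-1}}(T_{i-1})=1$, I would rewrite $\frac{1}{P_{T_{i-1}}(T_i)}=\frac{P_{T_{i-1}}(T_{i-1})}{P_{T_{i-1}}(T_i)}=X_{T_{i-1}}^{T_i,T_{i-1}}$, identifying the floating payment with the forward price process evaluated at its fixing date.

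The key step, and the one I expect to be the main obstacle, is verifying that each cashflow is symmetric in the sense required by Lemma \ref{pricing contracts with symmetric cashflows}, that is $\hat{\mathbb{E}}_t^{T_i}[\xi_i]=-\hat{\mathbb{E}}_t^{T_i}[-\xi_i]$ for $t\leq T_0$. At first glance this looks doubtful, since $1/P_{T_{i-1}}(T_i)$ is a nonlinear (convex) function of a bond price, and convex payoffs are precisely the asymmetric ones in this framework. The resolution is Proposition \ref{preliminary results} $(iv)$ applied with $T=T_i$ and $\tilde{T}=T_{i-1}$: the forward price $X^{T_i,T_{i-1}}$ is a symmetric $G$-martingale under $\hat{\mathbb{E}}^{T_i}$, and $\hat{\mathbb{E}}^{T_i}$ is exactly the forward sublinear expectation attached to the payment date $T_i$. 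Hence $\hat{\mathbb{E}}_t^{T_i}[X_{T_{i-1}}^{T_i,T_{i-1}}]=X_t^{T_i,T_{i-1}}=-\hat{\mathbb{E}}_t^{T_i}[-X_{T_{i-1}}^{T_i,T_{i-1}}]$, and since the additive constants are trivially symmetric, each $\xi_i$ is symmetric. I would check the index ranges carefully here: the forward price is defined up to $T_i\wedge T_{i-1}=T_{i-1}$, and since $t\leq T_0\leq T_{i-1}$ the conditioning at $t$ is legitimate.

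Once symmetry is in hand, Lemma \ref{pricing contracts with symmetric cashflows} yields $M_t\hat{\mathbb{E}}_t[\tilde{X}]=\sum_{i=0}^N P_t(T_i)\hat{\mathbb{E}}_t^{T_i}[\xi_i]=-M_t\hat{\mathbb{E}}_t[-\tilde{X}]$, so it only remains to evaluate the sum. Using $\hat{\mathbb{E}}_t^{T_i}[X_{T_{i-1}}^{T_i,T_{i-1}}]=P_t(T_{i-1})/P_t(T_i)$, the term $i=0$ vanishes, the terms $i=1,\dots,N-1$ each equal $P_t(T_{i-1})-P_t(T_i)$, and the term $i=N$ equals $P_t(T_{N-1})$. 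The sum then telescopes: $\sum_{i=1}^{N-1}\big(P_t(T_{i-1})-P_t(T_i)\big)+P_t(T_{N-1})=P_t(T_0)$, which is the asserted price. The argument is short once the symmetry observation is in place; the only genuine content is recognizing the floating coupon as a forward price, so that the superficially nonlinear contract is in fact symmetric and therefore carries a single price equal to $P_t(T_0)$.
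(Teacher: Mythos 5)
Your proof is correct and follows the same overall strategy as the paper's: show each cashflow is symmetric, invoke Lemma \ref{pricing contracts with symmetric cashflows}, and telescope. The one place you diverge is in how the symmetry is established. The paper first applies the change-of-numeraire identity of Proposition \ref{preliminary results} $(ii)$ with $T=T_{i-1}$, $\tilde{T}=T_i$, which converts $P_t(T_i)\hat{\mathbb{E}}_t^{T_i}[(T_i-T_{i-1})L_{T_{i-1}}(T_i)]$ into $P_t(T_{i-1})\hat{\mathbb{E}}_t^{T_{i-1}}[1-P_{T_{i-1}}(T_i)]$, and then uses $(iv)$ for the forward price $X^{T_{i-1},T_i}$ under $\hat{\mathbb{E}}^{T_{i-1}}$. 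You skip part $(ii)$ entirely by recognizing $1/P_{T_{i-1}}(T_i)$ as the reversed forward price $X_{T_{i-1}}^{T_i,T_{i-1}}$ and applying $(iv)$ directly under the payment-date expectation $\hat{\mathbb{E}}^{T_i}$. This is legitimate: the statement of Proposition \ref{preliminary results} $(iv)$ imposes no ordering on $T$ and $\tilde{T}$ (unlike part $(ii)$), and the paper itself relies on reversed forward prices such as $X^{T_{i+1},T_i}$ in the proof of Theorem \ref{in-arrears swap price}. Both routes land on the same per-coupon value $P_t(T_{i-1})-P_t(T_i)$; yours is marginally more economical, while the paper's makes the mechanism (discount with the bond maturing at the fixing date, then use the martingale property of the "forward-starting" bond price) more transparent and reusable for the swap in the following subsection.
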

\begin{proof}
We show that the cashflows have a symmetric payoff and apply Lemma \ref{pricing contracts with symmetric cashflows}. Due to Proposition \ref{preliminary results} $(ii)$ and $(iv)$, we have
\begin{align*}
P_t(T_i)\hat{\mathbb{E}}_t^{T_i}[(T_i-T_{i-1})L_{T_{i-1}}(T_i)]=P_t(T_{i-1})\hat{\mathbb{E}}_t^{T_{i-1}}[1-P_{T_{i-1}}(T_i)]=P_t(T_{i-1})-P_t(T_i)
\end{align*}
for all $i=1,...,N$. In a similar fashion we can show that
\begin{align*}
-P_t(T_i)\hat{\mathbb{E}}_t^{T_i}[-(T_i-T_{i-1})L_{T_{i-1}}(T_i)]=P_t(T_{i-1})-P_t(T_i)
\end{align*}
for all $i=1,...,N$. The result follows by Lemma \ref{pricing contracts with symmetric cashflows} and summation.
\end{proof}

\subsection{Interest Rate Swaps}\label{interest rate swaps}
The pricing formula for interest rate swaps is the same as in traditional models. An interest rate swap exchanges the floating rate with a fixed rate at each payment date. Without loss of generality, we consider a payer interest rate swap; that is, we pay the fixed rate and receive the floating rate. Hence, the cashflows are given by
\begin{align}\label{interest rate swap cashflows}
\xi_i=1_{\{1,...,N\}}(i)(T_i-T_{i-1})\big(L_{T_{i-1}}(T_i)-K\big)
\end{align}
for all $i=0,1,...,N$. Since the payoff is the difference of a zero-coupon bond and a floating rate note, the contract is symmetric. As in traditional term structure models, the price is given by a linear combination of zero-coupon bonds with different maturities. In particular, this implies that the swap rate---i.e., the value of the fixed rate that makes the value of the contract zero---is uniquely determined and does not differ from the expression obtained by standard models.
\begin{prp}\label{interest rate swap price}
Let $\xi_i$ be given by \eqref{interest rate swap cashflows} for all $i=0,1,...,N$. Then for $t\leq T_0$,
\begin{align*}
M_t\hat{\mathbb{E}}_t[\tilde{X}]=P_t(T_0)-P_t(T_N)-\sum_{i=1}^NP_t(T_i)(T_i-T_{i-1})K=-M_t\hat{\mathbb{E}}_t[-\tilde{X}].
\end{align*}
\end{prp}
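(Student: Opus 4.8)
The plan is to read the payer swap as the difference between a floating rate note and a fixed coupon bond, which reduces the whole computation to the two preceding propositions, and to make this precise through Lemma \ref{pricing contracts with symmetric cashflows}. Subtracting the fixed coupon bond cashflows \eqref{fixed coupon bond cashflows} from the floating rate note cashflows \eqref{floating rate note cashflows} cancels the common nominal payment $1_{\{N\}}(i)$ and leaves exactly the swap cashflows \eqref{interest rate swap cashflows}; this already tells me what the answer should be, namely the floating rate note price $P_t(T_0)$ from Proposition \ref{floating rate note price} minus the fixed coupon bond price $P_t(T_N)+\sum_{i=1}^N P_t(T_i)(T_i-T_{i-1})K$ from Proposition \ref{fixed coupon bond price}.

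Rather than subtract two prices under a sublinear expectation (which is not automatically linear), I would verify the hypothesis of Lemma \ref{pricing contracts with symmetric cashflows} directly on the swap cashflows. Fixing $i\in\{1,\dots,N\}$, I need $\hat{\mathbb{E}}_t^{T_i}[\xi_i]=-\hat{\mathbb{E}}_t^{T_i}[-\xi_i]$ for $\xi_i=(T_i-T_{i-1})(L_{T_{i-1}}(T_i)-K)$ and $t\leq T_0$. The floating part is already known to be symmetric from the proof of Proposition \ref{floating rate note price}, where Proposition \ref{preliminary results} $(ii)$ and $(iv)$ yield both $P_t(T_i)\hat{\mathbb{E}}_t^{T_i}[(T_i-T_{i-1})L_{T_{i-1}}(T_i)]=P_t(T_{i-1})-P_t(T_i)$ and the matching identity for the lower expectation. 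Since $(T_i-T_{i-1})K$ is deterministic, the cash invariance of the forward sublinear expectation lets me pull it out of $\hat{\mathbb{E}}_t^{T_i}$, so the constant fixed leg does not destroy symmetry and $\xi_i$ inherits the symmetry of its floating part.

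With the hypothesis verified, Lemma \ref{pricing contracts with symmetric cashflows} gives $M_t\hat{\mathbb{E}}_t[\tilde{X}]=\sum_{i=1}^N P_t(T_i)\hat{\mathbb{E}}_t^{T_i}[\xi_i]=-M_t\hat{\mathbb{E}}_t[-\tilde{X}]$, where the $i=0$ term drops out because $\xi_0=0$. Evaluating each summand with the floating rate note identity and cash invariance produces $P_t(T_i)\hat{\mathbb{E}}_t^{T_i}[\xi_i]=\big(P_t(T_{i-1})-P_t(T_i)\big)-P_t(T_i)(T_i-T_{i-1})K$, and summing over $i$ telescopes the first bracket to $P_t(T_0)-P_t(T_N)$, which is exactly the claimed formula.

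I expect no real obstacle: the argument is essentially bookkeeping on top of Propositions \ref{preliminary results}, \ref{fixed coupon bond price} and \ref{floating rate note price} and Lemma \ref{pricing contracts with symmetric cashflows}. The only step needing a little care is the symmetry check, specifically justifying that subtracting the deterministic fixed leg preserves the symmetry of the floating leg — this is precisely where cash invariance of $\hat{\mathbb{E}}^{T_i}$ enters — together with getting the telescoping sum right.
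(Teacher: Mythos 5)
Your proposal is correct and follows essentially the same route as the paper: the paper likewise verifies that each swap cashflow is symmetric under the forward sublinear expectation (reusing the computation from the floating rate note proof, with the deterministic $-K$ term handled by the cash invariance of $\hat{\mathbb{E}}^{T_i}$), then applies Lemma \ref{pricing contracts with symmetric cashflows} and sums, letting the floating-leg terms telescope. Your added caution about not naively subtracting two prices under a sublinear expectation is exactly the right instinct, and it is also why the paper argues termwise rather than via the FRN-minus-coupon-bond decomposition.
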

\begin{proof}
Again, we show that the cashflows have a symmetric payoff and use Lemma \ref{pricing contracts with symmetric cashflows} to obtain the result. As in the proof of Proposition \ref{floating rate note price}, we can show that
\begin{align*}
P_t(T_i)\hat{\mathbb{E}}_t^{T_i}\big[(T_i-T_{i-1})\big(L_{T_{i-1}}(T_i)-K\big)\big]={}&P_t(T_{i-1})-P_t(T_i)-P_t(T_i)(T_i-T_{i-1})K,
\\-P_t(T_i)\hat{\mathbb{E}}_t^{T_i}\big[-(T_i-T_{i-1})\big(L_{T_{i-1}}(T_i)-K\big)\big]={}&P_t(T_{i-1})-P_t(T_i)-P_t(T_i)(T_i-T_{i-1})K
\end{align*}
for all $i=1,...,N$. Then the assertion follows by Lemma \ref{pricing contracts with symmetric cashflows} and summation.
\end{proof}

\subsection{Swaptions}\label{swaptions}
We can price swaptions by computing the price in the corresponding HJM model without volatility uncertainty to obtain the upper and the lower bound for the price. A swaption gives the buyer the right to enter an interest rate swap at the first payment date. Hence, there is only one cashflow, which is determined by Proposition \ref{interest rate swap price}---i.e.,
\begin{align}\label{swaption cashflows}
\xi_i=1_{\{0\}}(i)\biggl(1-P_{T_0}(T_n)-\sum_{j=1}^NP_{T_0}(T_j)(T_j-T_{j-1})K\biggr)^+
\end{align}
for all $i=0,1,...,N$. Due to the nonlinearity of the payoff function, the upper and the lower expectation of the discounted payoff do not necessarily coincide; thus, the contract has an asymmetric payoff. The related pricing bounds are given by the linear expectation of the payoff when the forward prices are driven by a standard Brownian motion with the highest and the lowest possible volatility, respectively.
\begin{thm}\label{swaption price}
Let $\xi_i$ be given by \eqref{swaption cashflows} for all $i=0,1,...,N$. Then it holds
\begin{align*}
\hat{\mathbb{E}}[\tilde{X}]=P_0(T_0)u^{\overline{\sigma}}\Big(0,\big(\tfrac{P_0(T_i)}{P_0(T_0)}\big)_{i=1}^N\Big),\quad-\hat{\mathbb{E}}[-\tilde{X}]=P_0(T_0)u^{\underline{\sigma}}\Big(0,\big(\tfrac{P_0(T_i)}{P_0(T_0)}\big)_{i=1}^N\Big),
\end{align*}
where the function $u^\sigma:[0,T_0]\times\mathbb{R}^N\rightarrow\mathbb{R}$, for $\sigma\in\Sigma$, is defined by
\begin{align*}
u^\sigma(t,x):=\mathbb{E}_{P_0}\biggl[\biggl(1-X_{T_0}^N-\sum_{i=1}^NX_{T_0}^i(T_i-T_{i-1})K\biggr)^+\biggr]
\end{align*}
and the process $X^i=(X_s^i)_{t\leq s\leq T_0}$, for all $i=1,...,N$, is given by
\begin{align*}
X_s^i=x_i-\sum_{j=1}^d\int_t^s\sigma_u^j(T_0,T_i)X_u^i\sigma_jdB_u^j.
\end{align*}
\end{thm}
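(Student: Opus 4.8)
The plan is to reduce the swaption, which pays the single cashflow $\xi_0$ at time $T_0$, to a convex option on forward prices and then invoke Proposition \ref{pricing a convex bond option}. Since $\xi_i=0$ for $i\geq1$, the discounted payoff is $\tilde{X}=M_{T_0}^{-1}\xi_0$, so part $(i)$ of Proposition \ref{preliminary results} (with $t=0$, $T=T_0$, and $M_0=1$) gives $\hat{\mathbb{E}}[\tilde{X}]=P_0(T_0)\hat{\mathbb{E}}^{T_0}[\xi_0]$ and, applying the same identity to $-\xi_0$, $-\hat{\mathbb{E}}[-\tilde{X}]=-P_0(T_0)\hat{\mathbb{E}}^{T_0}[-\xi_0]$. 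This turns the computation of the two pricing bounds into evaluating the $T_0$-forward sublinear expectation of $\pm\xi_0$.

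First I would rewrite $\xi_0$ as a function of forward prices with forward-measure maturity $T_0$. Because $P_{T_0}(T_0)=1$, every bond price appearing in \eqref{swaption cashflows} is a forward price observed at $T_0$, namely $P_{T_0}(T_i)=X_{T_0}^{T_0,T_i}$. Hence $\xi_0=\varphi\big((X_{T_0}^{T_0,T_i})_{i=1}^N\big)$ with $\varphi(x)=\big(1-x_N-\sum_{i=1}^N x_i(T_i-T_{i-1})K\big)^+$, which places us in the setting of \eqref{definition of xi} and Proposition \ref{pricing a convex bond option} with the $T_0$-forward sublinear expectation and forward prices of maturities $T_1,\dots,T_N$. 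The argument of the positive part is affine in $x$, so $\varphi$ is convex; moreover $\varphi$ is globally Lipschitz (the positive part is $1$-Lipschitz and the affine map is Lipschitz), which yields the estimate \eqref{estimate} with $m=1$.

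With these two facts the theorem follows from Proposition \ref{pricing a convex bond option}. For the upper bound I apply the convex case directly: $\hat{\mathbb{E}}^{T_0}[\xi_0]=u^{\overline{\sigma}}\big(0,(X_0^{T_0,T_i})_{i=1}^N\big)$, and since $X_0^{T_0,T_i}=P_0(T_i)/P_0(T_0)$ this is the stated formula. For the lower bound I apply the concave case to $-\varphi$: the linear functional built from $-\varphi$ is $-u^{\sigma}$, so $\hat{\mathbb{E}}^{T_0}[-\xi_0]=-u^{\underline{\sigma}}\big(0,(X_0^{T_0,T_i})_{i=1}^N\big)$, and multiplying by $-P_0(T_0)$ gives the lower bound with volatility $\underline{\sigma}$.

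There is no deep obstacle here; the argument is essentially bookkeeping. The points requiring care are, first, confirming that $\varphi$ meets both hypotheses of Proposition \ref{pricing a convex bond option} --- convexity is immediate from the affine-plus-positive-part structure, and the polynomial-Lipschitz estimate \eqref{estimate} follows from global Lipschitz continuity --- and, second, keeping the signs straight when passing from the upper to the lower bound, where one invokes the concave version of the proposition for $-\varphi$ rather than the convex version. The only genuinely model-specific input is the identification $P_{T_0}(T_i)=X_{T_0}^{T_0,T_i}$, which is what lets the bond-price payoff be read as an option on forward prices.
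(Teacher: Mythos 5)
Your proposal is correct and follows essentially the same route as the paper: reduce via Proposition \ref{preliminary results} $(i)$ to the $T_0$-forward sublinear expectation of the payoff written as a convex option on the forward prices $X_{T_0}^{T_0,T_i}$, then invoke Proposition \ref{pricing a convex bond option}. You merely spell out more explicitly the verification of convexity, the Lipschitz bound giving \eqref{estimate}, and the sign bookkeeping for the lower bound, all of which the paper leaves to the reader.
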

\begin{proof}
We prove the claim by using Proposition \ref{pricing a convex bond option}. By Proposition \ref{preliminary results} $(i)$, we have
\begin{align*}
\hat{\mathbb{E}}[\tilde{X}]={}&P_0(T_0)\hat{\mathbb{E}}^{T_0}\biggl[\biggl(1-X_{T_0}^{T_0,T_N}-\sum_{i=1}^NX_{T_0}^{T_0,T_i}(T_i-T_{i-1})K\biggr)^+\biggr],
\\-\hat{\mathbb{E}}[-\tilde{X}]={}&-P_0(T_0)\hat{\mathbb{E}}^{T_0}\biggl[-\biggl(1-X_{T_0}^{T_0,T_N}-\sum_{i=1}^NX_{T_0}^{T_0,T_i}(T_i-T_{i-1})K\biggr)^+\biggr].
\end{align*}
Hence, the assertion follows by Proposition \ref{pricing a convex bond option}, since one can show that the payoff function of a swaption is convex and satisfies \eqref{estimate}.
\end{proof}

\subsection{Caps and Floors}\label{caps and floors}
Similar to swaptions, we can compute the upper and the lower bound for the price of a cap by pricing it in the corresponding HJM model without volatility uncertainty. A cap gives the buyer the right to exchange the floating rate with a fixed rate at each payment date. The cashflows are called caplets and are given by
\begin{align}\label{cap cashflows}
\xi_i=1_{\{1,...,N\}}(i)(T_i-T_{i-1})\big(L_{T_{i-1}}(T_i)-K\big)^+
\end{align}
for all $i=0,1,...,N$. The upper and the lower bound for the price of the contract are given by the linear expectation of its payoff, which corresponds to a collection of put options on forward prices, when the forward prices are driven by a standard Brownian motion with the highest and the lowest possible volatility, respectively.
\begin{thm}\label{cap price}
Let $\xi_i$ be given by \eqref{cap cashflows} for all $i=0,1,...,N$. Then it holds
\begin{align*}
\hat{\mathbb{E}}[\tilde{X}]=\sum_{i=1}^NP_0(T_{i-1})u_i^{\overline{\sigma}}\big(0,\tfrac{P_0(T_i)}{P_0(T_{i-1})}\big),\quad-\hat{\mathbb{E}}[-\tilde{X}]=\sum_{i=1}^NP_0(T_{i-1})u_i^{\underline{\sigma}}\big(0,\tfrac{P_0(T_i)}{P_0(T_{i-1})}\big),
\end{align*}
where the function $u_i^\sigma:[0,T_{i-1}]\times\mathbb{R}\rightarrow\mathbb{R}$, for all $i=1,...,N$ and $\sigma\in\Sigma$, is defined by
\begin{align*}
u_i^\sigma(t,x_i):=\tfrac{1}{K_i}\mathbb{E}_{P_0}[(K_i-X_{T_{i-1}}^i)^+]
\end{align*}
for $K_i:=\frac{1}{1+(T_i-T_{i-1})K}$ and the process $X^i=(X_s^i)_{t\leq s\leq T_{i-1}}$ is given by
\begin{align*}
X_s^i=x_i-\sum_{j=1}^d\int_t^s\sigma_u^j(T_{i-1},T_i)X_u^i\sigma_jdB_u^j.
\end{align*}
\end{thm}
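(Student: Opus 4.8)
The plan is to rewrite each caplet as a put option on a forward price, so that the cap becomes a stream of options on forward prices in the sense of \eqref{definition of Ybar}, and then to invoke Proposition \ref{pricing a stream of convex bond options}. First I would perform the algebraic reduction of the caplet payoff \eqref{cap cashflows}. Using the definition of the simply compounded spot rate and setting $K_i=\frac{1}{1+(T_i-T_{i-1})K}$, one finds
\begin{align*}
(T_i-T_{i-1})\big(L_{T_{i-1}}(T_i)-K\big)^+=\Big(\tfrac{1}{P_{T_{i-1}}(T_i)}-\tfrac{1}{K_i}\Big)^+=\tfrac{1}{P_{T_{i-1}}(T_i)}\cdot\tfrac{1}{K_i}\big(K_i-P_{T_{i-1}}(T_i)\big)^+,
\end{align*}
so that $\xi_i=\frac{1}{P_{T_{i-1}}(T_i)}\varphi_i(X_{T_{i-1}}^{T_{i-1},T_i})$ with $\varphi_i(x):=\frac{1}{K_i}(K_i-x)^+$ and $X_{T_{i-1}}^{T_{i-1},T_i}=P_{T_{i-1}}(T_i)$. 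In particular each caplet is $\mathcal{F}_{T_{i-1}}$-measurable, and after this prepay step the relevant payoff function $\varphi_i$ is convex and Lipschitz, hence satisfies \eqref{estimate}.

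Next I would embed the cap into the abstract recursion \eqref{definition of Ybar}. The natural choice is $n:=1$, $m:=2$, together with the shifted tenor $t_1:=0$ and $t_j:=T_{j-2}$ for $j=2,\dots,N+2$; then $t_{i+n}=T_{i-1}$, $t_{i+m}=T_i$, and $X_{t_i}^{t_{i-1+n},t_{i+n}}=P_{T_{i-2}}(T_{i-1})$, so \eqref{definition of Ybar} reads $\bar{Y}_i=P_{T_{i-2}}(T_{i-1})\hat{\mathbb{E}}_{T_{i-2}}^{T_{i-1}}[\varphi_i(X_{T_{i-1}}^{T_{i-1},T_i})+\bar{Y}_{i+1}]$. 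The key step is to show $\hat{\mathbb{E}}[\tilde{X}]=\bar{Y}_1$. For this I would start from the backward induction of Lemma \ref{pricing contracts with asymmetric cashflows}, which gives $\hat{\mathbb{E}}[\tilde{X}]=\tilde{Y}_0^+$ with $\tilde{Y}_i^+=P_{T_{i-1}}(T_i)\hat{\mathbb{E}}_{T_{i-1}}^{T_i}[\xi_i+\tilde{Y}_{i+1}^+]$, and prove by backward induction that $\tilde{Y}_i^+=\varphi_i(X_{T_{i-1}}^{T_{i-1},T_i})+\bar{Y}_{i+1}$ for $i=1,\dots,N$. This works because the prefactor $\frac{1}{P_{T_{i-1}}(T_i)}$ in $\xi_i$ cancels the outer bond price, turning the cashflow into the $\mathcal{F}_{T_{i-1}}$-measurable term $\varphi_i(X_{T_{i-1}}^{T_{i-1},T_i})$, which can be pulled out of the conditional forward sublinear expectation; what remains is exactly $\bar{Y}_{i+1}$. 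Evaluating the final step at $i=0$ (with $\xi_0=0$ and $T_{-1}=0$) then yields $\tilde{Y}_0^+=\bar{Y}_1$.

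With $\hat{\mathbb{E}}[\tilde{X}]=\bar{Y}_1$ established and each $\varphi_i$ convex and satisfying \eqref{estimate}, I would apply Proposition \ref{pricing a stream of convex bond options} and use $X_0^{t_1,t_{i+1}}=P_0(T_{i-1})$ and $X_0^{t_{i+1},t_{i+2}}=\frac{P_0(T_i)}{P_0(T_{i-1})}$ from the tenor embedding to get
\begin{align*}
\hat{\mathbb{E}}[\tilde{X}]=\sum_{i=1}^NX_0^{t_n,t_{i+n}}u_i^{\overline{\sigma}}\big(0,X_0^{t_{i+n},t_{i+m}}\big)=\sum_{i=1}^NP_0(T_{i-1})u_i^{\overline{\sigma}}\big(0,\tfrac{P_0(T_i)}{P_0(T_{i-1})}\big),
\end{align*}
where $u_i^{\overline{\sigma}}$ coincides with the stated HJM put price (the volatility $\sigma_u^j(t_{i+n},t_{i+m})=\sigma_u^j(T_{i-1},T_i)$ and the terminal time $t_{i+1}=T_{i-1}$ match the theorem). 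For the lower bound $-\hat{\mathbb{E}}[-\tilde{X}]$ I would run the identical argument with $-\xi_i$ in place of $\xi_i$; the associated payoff functions $-\varphi_i$ are concave, so the concave case of Proposition \ref{pricing a stream of convex bond options} delivers the same expression with $\overline{\sigma}$ replaced by $\underline{\sigma}$.

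The main obstacle I anticipate is the index bookkeeping in the second step: choosing $m,n$ and the shifted tenor so that the abstract forward prices $X^{t_{i+n},t_{i+m}}$ and the prefactors $X^{t_{i-1+n},t_{i+n}}$ in \eqref{definition of Ybar} line up with the concrete caplet data, and rigorously justifying that the prepaid caplet term is $\mathcal{F}_{T_{i-1}}$-measurable and may be extracted from the conditional forward sublinear expectation, so that the $\tilde{Y}^+$ recursion of Lemma \ref{pricing contracts with asymmetric cashflows} collapses onto the $\bar{Y}$ recursion. Once this identification is in place, the convexity of the put payoff and Proposition \ref{pricing a stream of convex bond options} reduce the robust cap price to a sum of Black-type prices in the extremal HJM models.
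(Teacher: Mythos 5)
Your proposal is correct and follows essentially the same route as the paper's proof: the paper likewise invokes Lemma \ref{pricing contracts with asymmetric cashflows}, rewrites each caplet as the put payoff $\pm\tfrac{1}{K_i}(K_i-X_{T_{i-1}}^{T_{i-1},T_i})^+$, collapses the $\tilde{Y}^\pm$ recursion onto the $\bar{Y}$ recursion via the shifted tenor $t_i=T_{i-2}$ (i.e.\ your choice $n=1$, $m=2$), and concludes with Proposition \ref{pricing a stream of convex bond options}. The only cosmetic difference is that the paper defines $\bar{Y}_i^\pm:=X_{T_{i-2}}^{T_{i-2},T_{i-1}}\hat{\mathbb{E}}_{T_{i-2}}^{T_{i-1}}[\tilde{Y}_i^\pm]$ and verifies the recursion directly, whereas you prove the equivalent identity $\tilde{Y}_i^+=\varphi_i(X_{T_{i-1}}^{T_{i-1},T_i})+\bar{Y}_{i+1}$ by backward induction.
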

\begin{proof}
We use Lemma \ref{pricing contracts with asymmetric cashflows} and Proposition \ref{pricing a stream of convex bond options} to show the assertion. According to Lemma \ref{pricing contracts with asymmetric cashflows}, we need to determine $\tilde{Y}_0^\pm$ in order to obtain $\hat{\mathbb{E}}[\pm\tilde{X}]$. We compute $\tilde{Y}_0^\pm$ by using Proposition \ref{pricing a stream of convex bond options}. For this purpose, we need to rewrite $\tilde{Y}_i^\pm$ for all $i=0,1,...,N$ and define a sequence of random variables to which we can apply Proposition \ref{pricing a stream of convex bond options}. For all $i=0,1,...,N$, we have
\begin{align*}
\tilde{Y}_i^\pm=P_{T_{i-1}}(T_i)\hat{\mathbb{E}}_{T_{i-1}}^{T_i}[\pm\xi_i+\tilde{Y}_{i+1}^\pm],
\end{align*}
where $\xi_i$ is given by \eqref{cap cashflows}, and $\tilde{Y}_{N+1}^\pm=0$. Since $\xi_i\in L_G^1(\Omega_{T_{i-1}})$ for all $i=1,...,N$ and $\xi_0=0$, we can show that
\begin{align*}
\tilde{Y}_i^\pm=\pm\tfrac{1}{K_i}(K_i-X_{T_{i-1}}^{T_{i-1},T_i})^++X_{T_{i-1}}^{T_{i-1},T_i}\hat{\mathbb{E}}_{T_{i-1}}^{T_i}[\tilde{Y}_{i+1}^\pm]
\end{align*}
for all $i=1,...,N$ and $\tilde{Y}_0^\pm=X_0^{0,T_0}\hat{\mathbb{E}}^{T_0}[\tilde{Y}_1^\pm]$. Now we define $\bar{Y}_i^\pm:=X_{T_{i-2}}^{T_{i-2},T_{i-1}}\hat{\mathbb{E}}_{T_{i-2}}^{T_{i-1}}[\tilde{Y}_i^\pm]$ for all $i=1,...,N+1$. Then we have $\tilde{Y}_0^\pm=\bar{Y}_1^\pm$ and
\begin{align*}
\bar{Y}_i^\pm=X_{T_{i-2}}^{T_{i-2},T_{i-1}}\hat{\mathbb{E}}_{T_{i-2}}^{T_{i-1}}[\pm\tfrac{1}{K_i}(K_i-X_{T_{i-1}}^{T_{i-1},T_i})^++\bar{Y}_{i+1}^\pm]
\end{align*}
for all $i=1,...,N$, where $\bar{Y}_{N+1}^\pm=0$. Moreover, we define $t_i:=T_{i-2}$ for all $i=1,...,N+2$. Then it holds $0=t_1<...<t_{N+2}\leq\bar{T}$ and
\begin{align*}
\bar{Y}_i^\pm=X_{t_i}^{t_i,t_{i+1}}\hat{\mathbb{E}}_{t_i}^{t_{i+1}}[\pm\tfrac{1}{K_i}(K_i-X_{t_{i+1}}^{t_{i+1},t_{i+2}})^++\bar{Y}_{i+1}^\pm]
\end{align*}
for all $i=1,...,N$. Thus, we can apply Proposition \ref{pricing a stream of convex bond options} to obtain
\begin{align*}
\bar{Y}_1^+=\sum_{i=1}^NX_0^{0,t_{i+1}}u_i^{\overline{\sigma}}(0,X_0^{t_{i+1},t_{i+2}}),\quad\bar{Y}_1^-=\sum_{i=1}^N-X_0^{0,t_{i+1}}u_i^{\underline{\sigma}}(0,X_0^{t_{i+1},t_{i+2}}),
\end{align*}
which proves the assertion.
\end{proof}
\par Floors can be priced in the same manner as caps. A floor gives the buyer the right to exchange a fixed rate with the floating rate at each payment date. The cashflows are called floorlets and are given by
\begin{align}\label{floor cashflows}
\xi_i=1_{\{1,...,N\}}(i)(T_i-T_{i-1})\big(K-L_{T_{i-1}}(T_i)\big)^+
\end{align}
for all $i=0,1,...,N$. Since the cashflows are very similar to caplets, we obtain similar pricing bounds compared to Theorem \ref{cap price}; the only difference is that we need to compute prices of call options on forward prices instead of put options to obtain the pricing bounds. It is remarkable that we can show this with the put-call parity, since the nonlinearity of the pricing measure implies that the put-call parity, in general, does not hold in the presence of volatility uncertainty.
\begin{thm}\label{floor price}
Let $\xi_i$ be given by \eqref{floor cashflows} for all $i=0,1,...,N$. Then it holds
\begin{align*}
\hat{\mathbb{E}}[\tilde{X}]=\sum_{i=1}^NP_0(T_{i-1})u_i^{\overline{\sigma}}\big(0,\tfrac{P_0(T_i)}{P_0(T_{i-1})}\big),\quad-\hat{\mathbb{E}}[-\tilde{X}]=\sum_{i=1}^NP_0(T_{i-1})u_i^{\underline{\sigma}}\big(0,\tfrac{P_0(T_i)}{P_0(T_{i-1})}\big),
\end{align*}
where the function $u_i^\sigma:[0,T_{i-1}]\times\mathbb{R}\rightarrow\mathbb{R}$, for all $i=1,...,N$ and $\sigma\in\Sigma$, is defined by
\begin{align*}
u_i^\sigma(t,x_i):=\tfrac{1}{K_i}\mathbb{E}_P[(X_{T_{i-1}}^i-K_i)^+]
\end{align*}
and $K_i$ and the process $X^i=(X_s^i)_{t\leq s\leq T_{i-1}}$ are given as in Theorem \ref{cap price}.
\end{thm}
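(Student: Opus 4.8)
The plan is to avoid repeating the backward induction behind Theorem \ref{cap price} and instead exploit the elementary payoff identity that underlies put-call parity, $(K-L)^+=(L-K)^+-(L-K)$. Writing $\xi_i$ for the floorlets in \eqref{floor cashflows} and $\xi_i^{\mathrm{cap}}$, $\xi_i^{\mathrm{swap}}$ for the caplets \eqref{cap cashflows} and the payer-swap cashflows \eqref{interest rate swap cashflows}, this identity gives $\xi_i=\xi_i^{\mathrm{cap}}-\xi_i^{\mathrm{swap}}$ for every $i$, hence $\tilde{X}=\tilde{X}^{\mathrm{cap}}-\tilde{X}^{\mathrm{swap}}$ at the level of discounted payoffs. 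The decisive structural fact is that the swap leg $\tilde{X}^{\mathrm{swap}}$ is symmetric: by Proposition \ref{interest rate swap price} it satisfies $\hat{\mathbb{E}}[\tilde{X}^{\mathrm{swap}}]=-\hat{\mathbb{E}}[-\tilde{X}^{\mathrm{swap}}]$. (One could alternatively mirror the cap proof verbatim, since each floorlet rewrites as the convex payoff $\tfrac{1}{K_i}(X_{T_{i-1}}^{T_{i-1},T_i}-K_i)^+$ and Proposition \ref{pricing a stream of convex bond options} applies directly; but given the cap result the parity route is shorter.)

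First I would record the translation property that makes put-call parity admissible despite the sublinearity of $\hat{\mathbb{E}}$: whenever $\eta$ satisfies $\hat{\mathbb{E}}[\eta]=-\hat{\mathbb{E}}[-\eta]$, one has $\hat{\mathbb{E}}[\zeta+\eta]=\hat{\mathbb{E}}[\zeta]+\hat{\mathbb{E}}[\eta]$ for every admissible $\zeta$. The inequality ``$\leq$'' is sub-additivity, and the reverse follows from $\hat{\mathbb{E}}[\zeta]\leq\hat{\mathbb{E}}[\zeta+\eta]+\hat{\mathbb{E}}[-\eta]=\hat{\mathbb{E}}[\zeta+\eta]-\hat{\mathbb{E}}[\eta]$, using the symmetry of $\eta$. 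Applying this with $\eta=\mp\tilde{X}^{\mathrm{swap}}$ and $\zeta=\pm\tilde{X}^{\mathrm{cap}}$ turns the decomposition above into
\begin{align*}
\hat{\mathbb{E}}[\tilde{X}]=\hat{\mathbb{E}}[\tilde{X}^{\mathrm{cap}}]-\hat{\mathbb{E}}[\tilde{X}^{\mathrm{swap}}],\quad-\hat{\mathbb{E}}[-\tilde{X}]=-\hat{\mathbb{E}}[-\tilde{X}^{\mathrm{cap}}]-\hat{\mathbb{E}}[\tilde{X}^{\mathrm{swap}}].
\end{align*}
Here Theorem \ref{cap price} supplies the two cap expressions in terms of the put-type values $u_i^{\overline{\sigma}}$ and $u_i^{\underline{\sigma}}$, while Proposition \ref{interest rate swap price} supplies $\hat{\mathbb{E}}[\tilde{X}^{\mathrm{swap}}]=P_0(T_0)-P_0(T_N)-\sum_{i=1}^NP_0(T_i)(T_i-T_{i-1})K$.

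It then remains to match the resulting put-plus-swap expression with the claimed call formula, which reduces to classical put-call parity in the HJM model with frozen volatility $\sigma\in\{\overline{\sigma},\underline{\sigma}\}$. Since $X^i$ is a $P_0$-martingale with $\mathbb{E}_{P_0}[X_{T_{i-1}}^i]=x_i$, the identity $(K_i-X)^+-(X-K_i)^+=K_i-X$ yields, after multiplying by $P_0(T_{i-1})/K_i$ with $x_i=P_0(T_i)/P_0(T_{i-1})$ and summing over $i$, precisely the swap value above written as $\sum_iP_0(T_{i-1})(u_i^{\sigma,\mathrm{put}}-u_i^{\sigma,\mathrm{call}})$; subtracting it from the cap bounds leaves $\sum_iP_0(T_{i-1})u_i^{\overline{\sigma}}$ and $\sum_iP_0(T_{i-1})u_i^{\underline{\sigma}}$ with the call-type $u_i^\sigma$ of the statement. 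The step to get right is the translation identity: it is exactly the symmetry of the swap leg, together with the fact that the parity term $K_i-x_i$ is a martingale expectation and hence volatility-independent, that lets put-call parity survive the passage to a sublinear pricing rule and keeps the same $\overline{\sigma}$ (respectively $\underline{\sigma}$) governing both the cap's puts and the floor's calls. Everything else is the routine verification just sketched.
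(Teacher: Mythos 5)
Your proposal is correct and follows essentially the same route as the paper's own proof: decompose each floorlet as caplet minus swap cashflow, use the symmetry of the swap leg (Proposition \ref{interest rate swap price}) to make the sublinear expectation additive across the decomposition, and then conclude via classical put-call parity in the frozen-volatility models. You merely make explicit two steps the paper leaves implicit, namely the translation property for symmetric summands and the final matching of the put-plus-swap expression with the call-type $u_i^\sigma$.
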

\begin{proof}
Although $\hat{\mathbb{E}}$ is sublinear, we can still use the put-call parity to prove the claim, since interest rate swaps have a symmetric payoff. For all $i=1,...,N$, we have
\begin{align*}
\xi_i=(T_i-T_{i-1})\big(L_{T_{i-1}}(T_i)-K\big)^+-(T_i-T_{i-1})\big(L_{T_{i-1}}(T_i)-K\big).
\end{align*}
Thus, we get $\tilde{X}=\tilde{Y}-\tilde{Z}$, where $\tilde{Y}$, respectively $\tilde{Z}$, denotes the discounted payoff of a cap, respectively interest rate swap; that is,
\begin{align*}
\tilde{Y}:=\sum_{i=1}^NM_{T_i}^{-1}(T_i-T_{i-1})\big(L_{T_{i-1}}(T_i)-K\big)^+,
\quad\tilde{Z}:=\sum_{i=1}^NM_{T_i}^{-1}(T_i-T_{i-1})\big(L_{T_{i-1}}(T_i)-K\big).
\end{align*}
Due to the sublinearity of $\hat{\mathbb{E}}$, we get $\hat{\mathbb{E}}[\tilde{X}]\leq\hat{\mathbb{E}}[\tilde{Y}]+\hat{\mathbb{E}}[-\tilde{Z}]$ and $\hat{\mathbb{E}}[\tilde{X}]\geq\hat{\mathbb{E}}[\tilde{Y}]-\hat{\mathbb{E}}[\tilde{Z}]$. Hence, by Proposition \ref{interest rate swap price}, we obtain $\hat{\mathbb{E}}[\tilde{X}]=\hat{\mathbb{E}}[\tilde{Y}]-\hat{\mathbb{E}}[\tilde{Z}]$. In a similar fashion, we can show that $-\hat{\mathbb{E}}[-\tilde{X}]=-\hat{\mathbb{E}}[-\tilde{Y}]-\hat{\mathbb{E}}[\tilde{Z}]$. Therefore, the assertion follows by the classical put-call parity.
\end{proof}

\subsection{In-Arrears Contracts}\label{in-arrears contracts}
The pricing procedure from the previous subsection also works for contracts in which the floating rate is settled in arrears. The difference between the contracts from above and in-arrears contracts is that the simply compounded spot rate is reset each time when the contract pays off. As a representative contract, we show how to price in-arrears swaps; other contracts, such as in-arrears caps and floors, can be priced in a similar way. In contrast to a plain vanilla interest rate swap, the cashflows are now given by
\begin{align}\label{in-arrears swap cashflows}
\xi_i=1_{\{0,1,...,N-1\}}(i)(T_{i+1}-T_i)\big(L_{T_i}(T_{i+1})-K\big)
\end{align}
for all $i=0,1,...,N$. Then the contract is not necessarily symmetric, and the pricing bounds are given by the linear expectation of its payoff, corresponding to a collection of functions depending on forward prices, when the forward prices are driven by a standard Brownian motion with the highest and the lowest possible volatility, respectively. As a consequence, there is not a unique swap rate for in-arrears swaps.
\begin{thm}\label{in-arrears swap price}
Let $\xi_i$ be given by \eqref{in-arrears swap cashflows} for all $i=0,1,...,N$. Then it holds
\begin{align*}
\hat{\mathbb{E}}[\tilde{X}]=\sum_{i=1}^NP_0(T_i)u_i^{\overline{\sigma}}\big(0,\tfrac{P_0(T_{i-1})}{P_0(T_i)}\big),\quad-\hat{\mathbb{E}}[-\tilde{X}]=\sum_{i=1}^NP_0(T_i)u_i^{\underline{\sigma}}\big(0,\tfrac{P_0(T_{i-1})}{P_0(T_i)}\big),
\end{align*}
where the function $u_i^\sigma:[0,T_{i-1}]\times\mathbb{R}\rightarrow\mathbb{R}$, for all $i=1,...,N$ and $\sigma\in\Sigma$, is defined by
\begin{align*}
u_i^\sigma(t,x_i):=\mathbb{E}_{P_0}[X_{T_{i-1}}^i(X_{T_{i-1}}^i-\tfrac{1}{K_i})]
\end{align*}
for $K_i$ as in Theorem \ref{cap price} and the process $X^i=(X_s^i)_{t\leq s\leq T_{i-1}}$ is given by
\begin{align*}
X_s^i=x_i-\sum_{j=1}^d\int_t^s\sigma_u^j(T_i,T_{i-1})X_u^i\sigma_jdB_u^j.
\end{align*}
\end{thm}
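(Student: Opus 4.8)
The plan is to mirror the proof of Theorem \ref{cap price}, combining the backward induction of Lemma \ref{pricing contracts with asymmetric cashflows} with Proposition \ref{pricing a stream of convex bond options}; the essential new ingredient is a convexity adjustment coming from the in-arrears feature.

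First I would simplify the cashflows. With $K_i:=\tfrac{1}{1+(T_i-T_{i-1})K}$ and the definition of $L$, one checks that for $i=0,1,\dots,N-1$,
\begin{align*}
\xi_i=\tfrac{1}{P_{T_i}(T_{i+1})}-\tfrac{1}{K_{i+1}}=X_{T_i}^{T_{i+1},T_i}-\tfrac{1}{K_{i+1}},
\end{align*}
while $\xi_N=0$. By Lemma \ref{pricing contracts with asymmetric cashflows}, $\hat{\mathbb{E}}[\pm\tilde{X}]=\tilde{Y}_0^\pm$, where $\tilde{Y}_i^\pm=P_{T_{i-1}}(T_i)\hat{\mathbb{E}}_{T_{i-1}}^{T_i}[\pm\xi_i+\tilde{Y}_{i+1}^\pm]$.

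The crux is the following observation, which is also the reason why the contract is asymmetric. Unlike a plain swap or a cap, the rate $L_{T_i}(T_{i+1})$ is set over $[T_i,T_{i+1}]$ but settled at $T_i$; hence $\xi_i$ is a function of the forward price $X^{T_{i+1},T_i}$, which is a symmetric $G$-martingale only under $\hat{\mathbb{E}}^{T_{i+1}}$ (Proposition \ref{preliminary results} $(iv)$), not under the $\hat{\mathbb{E}}^{T_i}$ appearing in the recursion. I would therefore apply the change-of-numeraire identity Proposition \ref{preliminary results} $(ii)$ to pass from the $T_i$-forward to the $T_{i+1}$-forward expectation: multiplying $\xi_i$ by $P_{T_i}(T_{i+1})=(X_{T_i}^{T_{i+1},T_i})^{-1}$ turns the affine payoff into the convex quadratic $\varphi_{i+1}(x):=x(x-\tfrac{1}{K_{i+1}})$ evaluated at the $\hat{\mathbb{E}}^{T_{i+1}}$-martingale $X^{T_{i+1},T_i}$. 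At the level of a single cashflow this reads $P_0(T_i)\hat{\mathbb{E}}^{T_i}[\xi_i]=P_0(T_{i+1})\hat{\mathbb{E}}^{T_{i+1}}[\varphi_{i+1}(X_{T_i}^{T_{i+1},T_i})]$.

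Next I would recast the coupled recursion in the form \eqref{definition of Ybar}. Setting $t_k:=T_{k-2}$, $n:=2$, $m:=1$, and $\varphi_i(x):=x(x-\tfrac{1}{K_i})$, the change of numeraire threaded through each step rewrites $\tilde{Y}_i^\pm$ in terms of a sequence $\bar{Y}_i^\pm$ obeying \eqref{definition of Ybar} with payoff $\pm\varphi_i$, and it simultaneously produces the prefactor relating $\tilde{Y}_0^\pm$ to $\bar{Y}_1^\pm$ (the index shift $n=2$ moves the base maturity from $0$, as in Theorem \ref{cap price}, to $T_0$, which is what introduces the $P_0(T_0)$ factor). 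Since $\varphi_i$ is convex and satisfies \eqref{estimate} with $m=1$, Proposition \ref{pricing a stream of convex bond options} then collapses the backward induction into a sum evaluated at the extremal volatilities: the $+$ branch (convex $\varphi_i$) is attained at $\overline{\sigma}$ and the $-$ branch (concave $-\varphi_i$) at $\underline{\sigma}$. Translating the time-$0$ forward prices $X_0^{T_0,T_i}=\tfrac{P_0(T_i)}{P_0(T_0)}$ and $X_0^{T_i,T_{i-1}}=\tfrac{P_0(T_{i-1})}{P_0(T_i)}$ back into bond prices and restoring the $P_0(T_0)$ factor yields the two stated formulas.

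The main obstacle I anticipate is precisely the convexity adjustment embedded in the coupled recursion: one must combine the change-of-numeraire factors at successive maturities so that each in-arrears cashflow becomes a convex function of a forward-price martingale under the appropriate forward sublinear expectation, while keeping the two-step index structure consistent with the template \eqref{definition of Ybar}. The convexity of $\varphi_i$ is what guarantees a common worst-case volatility across all cashflows, so that the upper and lower expectations decouple into the displayed sums; verifying convexity and the estimate \eqref{estimate} for $\varphi_i$ is then routine.
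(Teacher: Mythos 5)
Your proposal is correct and follows essentially the same route as the paper's proof: the same rewriting of $\xi_i$ as $X_{T_i}^{T_{i+1},T_i}-\tfrac{1}{K_{i+1}}$, the same change of numeraire via Proposition \ref{preliminary results} $(ii)$ turning each cashflow into the convex quadratic $\varphi_i(x)=x(x-\tfrac{1}{K_i})$, and the same reduction to the template \eqref{definition of Ybar} with $n=2$, $m=1$, followed by Lemma \ref{pricing contracts with asymmetric cashflows} and Proposition \ref{pricing a stream of convex bond options}. Your bookkeeping of the prefactors (the relation $\tilde{Y}_0^\pm=X_0^{0,T_0}\bar{Y}_1^\pm$, with $P_0(T_0)$ cancelling against $X_0^{T_0,T_i}=P_0(T_i)/P_0(T_0)$) also matches the paper's computation.
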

\begin{proof}
As in the proof of Theorem \ref{cap price}, we use Lemma \ref{pricing contracts with asymmetric cashflows} and Proposition \ref{pricing a stream of convex bond options} to show the assertion. By Lemma \ref{pricing contracts with asymmetric cashflows}, we need to compute $\tilde{Y}_0^\pm$ to find $\hat{\mathbb{E}}[\pm\tilde{X}]$. In order to find $\tilde{Y}_0^\pm$, we rewrite $\tilde{Y}_i^\pm$ for all $i=0,1,...,N$ and define a sequence of random variables to which we can apply Proposition \ref{pricing a stream of convex bond options}. For all $i=0,1,...,N$, we have
\begin{align*}
\tilde{Y}_i^\pm=P_{T_{i-1}}(T_i)\hat{\mathbb{E}}_{T_{i-1}}^{T_i}[\pm\xi_i+\tilde{Y}_{i+1}^\pm],
\end{align*}
where $\xi_i$ is given by \eqref{in-arrears swap cashflows} and $\tilde{Y}_{N+1}^\pm=0$. Since $\xi_N=0$, we get $\tilde{Y}_N^\pm=0$. For all $i=0,1,...,N-1$, we obtain, by Proposition \ref{preliminary results} $(ii)$,
\begin{align*}
\tilde{Y}_i^\pm=X_{T_{i-1}}^{T_{i-1},T_{i+1}}\hat{\mathbb{E}}_{T_{i-1}}^{T_{i+1}}[\pm X_{T_i}^{T_{i+1},T_i}(X_{T_i}^{T_{i+1},T_i}-\tfrac{1}{K_{i+1}})+X_{T_i}^{T_{i+1},T_i}\tilde{Y}_{i+1}^\pm].
\end{align*}
We define $\bar{Y}_i^\pm:=X_{T_{i-2}}^{T_{i-1},T_{i-2}}\tilde{Y}_{i-1}^\pm$ for all $i=1,...,N+1$. Then it holds $\tilde{Y}_0^\pm=X_0^{0,T_0}\bar{Y}_1^\pm$ and
\begin{align*}
\bar{Y}_i^\pm=X_{T_{i-2}}^{T_{i-1},T_i}\hat{\mathbb{E}}_{T_{i-2}}^{T_i}[\pm X_{T_{i-1}}^{T_i,T_{i-1}}(X_{T_{i-1}}^{T_i,T_{i-1}}-\tfrac{1}{K_i})+\bar{Y}_{i+1}^\pm]
\end{align*}
for all $i=1,...,N$, where $\bar{Y}_{N+1}^\pm=0$. Furthermore, we set $t_i:=T_{i-2}$ for all $i=1,...,N+2$. Then we get $0=t_1<...<t_{N+2}\leq\bar{T}$ and
\begin{align*}
\bar{Y}_i^\pm=X_{t_i}^{t_{i+1},t_{i+2}}\hat{\mathbb{E}}_{t_i}^{t_{i+2}}[\pm X_{t_{i+1}}^{t_{i+2},t_{i+1}}(X_{t_{i+1}}^{t_{i+2},t_{i+1}}-\tfrac{1}{K_i})+\bar{Y}_{i+1}^\pm]
\end{align*}
for all $i=1,...,N$. Therefore, by Proposition \ref{pricing a stream of convex bond options}, it holds
\begin{align*}
\bar{Y}_1^+=\sum_{i=1}^NX_0^{t_2,t_{i+2}}u_i^{\overline{\sigma}}(0,X_0^{t_{i+2},t_{i+1}}),\quad\bar{Y}_1^-=\sum_{i=1}^N-X_0^{t_2,t_{i+2}}u_i^{\underline{\sigma}}(0,X_0^{t_{i+2},t_{i+1}}),
\end{align*}
which proves the assertion.
\end{proof}

\subsection{Other Contracts}\label{other contracts}
The pricing of other (more complex) contracts requires numerical methods. Almost all contracts in fixed income markets correspond to (a collection of) options on forward prices, and Propositions \ref{pricing a convex bond option} and \ref{pricing a stream of convex bond options} show that---as demonstrated in the preceding subsections---we can reduce the problem of pricing contracts to pricing them in the corresponding HJM model without volatility uncertainty whenever the payoffs are convex or concave. However, this is not always the case. Therefore, we need a different pricing procedure in the remaining cases, that is, when payoffs are not convex or concave. In such cases, we can solve the nonlinear PDEs arising due to Propositions \ref{pricing a bond option} and \ref{pricing a stream of bond options} in order to determine prices of contracts. For this purpose, we can generally use all numerical schemes for solving nonlinear PDEs. In particular, there are various numerical approaches in the literature on robust finance to price derivatives written on asset prices under volatility uncertainty \citep*{avellanedalevyparas1995,guyonhenrylabordere2011,lyons1995,nendel2021}, including trinomial tree approximations of stochastic differential equations and Monte Carlo methods in addition to classical methods for solving PDEs numerically. How such approaches perform in the present setting and which approach works best are interesting questions for future research, since most fixed income contracts (as well as complex model specifications in the HJM framework) lead to high-dimensional pricing problems as opposed to pricing typical derivatives in most asset market models.

\section{Market Incompleteness}\label{market incompleteness}
Empirical evidence shows that volatility risk in fixed income markets cannot be hedged by trading solely bonds, which is referred to as \textit{unspanned stochastic volatility} and contradicts many traditional term structure models. By using data on interest rate swaps, caps, and floors, \citet*{collindufresnegoldstein2002} showed that prices of caps and floors, i.e., derivatives exposed to volatility risk, are driven by factors that do not affect prices of interest rate swaps, i.e., the term structure. Therefore, derivatives exposed to volatility risk cannot be replicated by a portfolio consisting solely of bonds, which implies that it is not possible to hedge volatility risk in fixed income markets. The empirical findings of \citet*{collindufresnegoldstein2002} contradict many traditional term structure models, since bond prices are typically functions depending on all risk factors driving the model and bonds can typically be used to hedge caps and floors. As a consequence, \citet*{collindufresnegoldstein2002} examined which term structure models exhibit unspanned stochastic volatility; this led to the development of new models displaying unspanned stochastic volatility \citep*{casassuscollindufresnegoldstein2005,filipoviclarssonstatti2019,filipoviclarssontrolle2017}.
\par In the presence of volatility uncertainty, term structure models naturally exhibit unspanned stochastic volatility, since volatility uncertainty naturally leads to market incompleteness. As mentioned in Remark \ref{remark on hedging}, a classical result in the literature on robust finance is that model uncertainty leads to market incompleteness: instead of perfectly hedging derivatives, one has to superhedge the payoff of most derivatives, which can be inferred from the pricing-hedging duality. Similar to the pricing-hedging duality in the presence of volatility uncertainty \citep*[Theorem 3.6]{vorbrink2014}, we can show that it is not possible to hedge a contract with an asymmetric payoff with a portfolio of bonds. From Theorems \ref{cap price} and \ref{floor price}, we can deduce that caps and floors have an asymmetric payoff if $\overline{\sigma}>\underline{\sigma}$. Therefore, derivatives exposed to volatility risk cannot be hedged by trading solely bonds when there is volatility uncertainty.
\par Moreover, the uncertain volatility affects prices of nonlinear contracts, while prices of linear contracts and the term structure are robust with respect to the volatility---confirming the empirical findings of \citet*{collindufresnegoldstein2002}. In simple model specifications, bond prices have an affine structure with respect to the short rate and an additional factor \citep*[Examples 4.1, 4.2]{holzermann2021'}. They are, however, completely unaffected by the uncertain volatility and its bounds. The same holds for the swap rate, since the price of an interest rate swap (by Proposition \ref{interest rate swap price}) is a linear combination of bond prices, as in the classical case without volatility uncertainty. On the other hand, the uncertain volatility influences prices of caps and floors, since they depend on the bounds for the volatility (by Theorems \ref{cap price} and \ref{floor price}). Therefore, the prices of caps and floors are driven by an additional factor that does not influence term structure movements and (thus) changes in swap rates.

\section{Conclusion}\label{conclusion}
In the present paper, we deal with the pricing of contracts in fixed income markets under Knightian uncertainty about the volatility. The starting point is an arbitrage-free HJM model with volatility uncertainty. Such a framework leads to a sublinear pricing measure, which yields either the price of a contract or its pricing bounds. We derive various methods to price all major interest rate derivatives. We find that there is a single price for typical linear contracts, which is the same as in traditional term structure models; thus, the traditional pricing formulas are completely robust with respect to the volatility. There is a range of prices for typical nonlinear contracts, which is bounded by the prices from the corresponding HJM model without volatility uncertainty for different volatilities. In fact, this applies to all contracts that correspond to (a collection of) convex (or concave) options on forward prices; hence, one can use traditional pricing methods to price such contracts. If the options are not convex (or concave), the prices are characterized by nonlinear PDEs; then one has to rely on numerical schemes. From a theoretical point of view, the main insight is that the pricing formulas are in line with empirical evidence in contrast to traditional pricing formulas.
\par From a practical perspective, the robust pricing procedure developed in this paper provides a theoretical framework for stress testing by pricing contracts in the presence of different levels of volatility uncertainty. When pricing interest rate derivatives in a specific HJM model without volatility uncertainty, one can additionally investigate how robust the prices are with respect to the volatility by allowing for a certain degree of uncertainty about the volatility. For this purpose, one compares the price in an HJM model driven by a standard Brownian motion with the pricing bounds in an HJM model driven by a $G$-Brownian motion with extreme values $\overline{\sigma}=\diag(1+\epsilon,...,1+\epsilon)$ and $\underline{\sigma}=\diag(1-\epsilon,...,1-\epsilon)$ for some $\epsilon>0$. In this way, one can observe how much uncertainty about the price of the contract a certain degree of volatility uncertainty causes.
\par Instead of specifying the level of uncertainty about the volatility, one can also infer it from market data to price other instruments. In order to obtain the extreme values for the volatility, one can fit the range of prices resulting from the pricing procedure in this paper to a spread of prices observed in reality---for example, when prices of contracts are quoted in the form of bid-ask spreads. In this case, the extreme values $\overline{\sigma}$ and $\underline{\sigma}$ are determined in such a way that the upper and the lower bound for the price of a contract match its ask and its bid price, respectively. Then one can use the extracted bounds for the volatility to price other contracts whose prices are not quoted.
\par Alternatively, one can infer the level of uncertainty about the volatility from data on the historical volatility. By looking at historical variations of the volatility, one can extract the bounds for the volatility in the form of a confidence interval in order to obtain a confidence interval for possible prices of a contract. However, this approach is less reasonable than the previous one as the bounds for the volatility represent extreme values for the future evolution of the volatility, which can be very distinct from its past behavior. Prices of options---especially options exposed to volatility risk, such as caps and floors---reflect the market's belief about the future volatility; thus, they provide a better estimate for the future evolution of the volatility (at least from the market's perspective).

\bibliography{C:/Users/jhoelzermann/Documents/Uni/Literature/Literature}
\bibliographystyle{chicago}

\end{document}